\definecolor{cellcolor}{rgb}{0.94, 0.92, 0.84}
\newtheorem{thm}{Theorem}[section]
\newtheorem{prop}[thm]{Proposition}
\newtheorem{defi}[thm]{Definition}
\newtheorem{lemma}[thm]{Lemma}
\newtheorem{rem}[thm]{Remark}
\newcommand{\NN}{\mathbb{N}}
\newcommand{\AC}{\mathscr{A}}
\newcommand{\ST}{\mathscr{S}}
\newcommand{\CC}{\mathcal{C}}
\newcommand{\NP}{\mathcal{NP}}
\newcommand{\RR}{\mathbb{R}}
\newcommand{\ZZ}{\mathbb{Z}}
\journal{}
\begin{document}

\begin{frontmatter}

\title{Solving the List Coloring Problem through a Branch-and-Price algorithm\tnoteref{grant}}

\author[a,b]{Mauro Lucci}
\author[a,b]{Graciela Nasini}
\author[a,b]{Daniel Sever\'in\fnref{correspon}}

\address[a]{Depto. de Matem\'atica (FCEIA), Universidad Nacional de Rosario, Argentina}

\address[b]{CONICET, Argentina}

\tnotetext[grant]{Partially supported by grants PICT-2020-03032 (ANPCyT) and PIP-1900 (CONICET).\\
\emph{E-mail addresses}: \texttt{mlucci@fceia.unr.edu.ar} (M. Lucci),
\texttt{nasini@fceia.unr.edu.ar} (G. Nasini),
\texttt{daniel@fceia.unr.edu.ar} (D. Sever\'in).}
\fntext[correspon]{Corresponding author at Departamento de Matem\'atica (FCEIA), UNR, Pellegrini 250, Rosario, Argentina.}

\begin{abstract}
In this work, we present 
a branch-and-price algorithm to solve the weighted version of the List Coloring Problem, based on a vertex cover formulation by stable sets. This problem   is interesting for its applications and  also for the many other problems that it generalizes, including the well-known Graph Coloring Problem.
With the introduction of the concept of indistinguishable colors, some theoretical results are presented which are later incorporated into the algorithm.
We propose two branching strategies based on others for the Graph Coloring Problem, the first is an adaptation of the one used by Mehrotra and Trick in their pioneering branch-and-price algorithm 
and the other is inspired by the one used by M\'endez-D\'iaz and Zabala in their branch-and-cut algorithm. 
The rich structure of this problem makes both branching strategies robust.
Extended computation experimentation on a wide variety of instances shows the effectiveness of this approach and evidences the different behaviors that the algorithm can have according to the structure of each type of instance.
\end{abstract}
\begin{keyword}
List Coloring, Branch and Price, Weighted Problem. 
\MSC[2010] 90C57 \sep 05C15 
\end{keyword}
\end{frontmatter}


\section{Introduction} \label{intro}

The \emph{Graph Coloring Problem} (GCP) models a wide range of planning problems such as timetabling, scheduling, electronic bandwidth allocation, sequencing, and register allocation.
Because of this, many efforts have been made to develop high-performance tools to solve this $\NP$-hard problem.

Given an undirected simple graph $G = (V,E)$  and a set of colors $\CC$, a $\CC$-\emph{coloring} (or just a \emph{coloring}) of $G$ is a function $f : V \rightarrow \CC$ such that $f(u) \neq f(v)$ for every edge $(u,v)$ of $G$.
Given a coloring $f$ of $G$ and $j\in \CC$, $f^{-1}(j) \doteq \{v \in V: f(v) = j\}$ is the \emph{color class of} $j$, i.e.~ the subset of vertices colored by $j$. 
The \emph{active colors in} $f$, denoted by $\AC(f)$, is the set of colors assigned to some vertex, i.e.~$\AC(f) \doteq \{j \in \CC : f^{-1}(j) \neq \emptyset\}$.
Given $k\in \ZZ_+$, a coloring $f$ is a $k$-coloring if $|\AC(f)|=k$. 
The GCP consists of finding the minimum $k$ such that $G$ admits a $k$-coloring. 
This minimum denoted by $\chi(G)$ is
called \emph{the chromatic number of} $G$.

In the context of exact approaches, best known algorithms are based on integer linear programming (ILP) models and solved by different
branch-and-bound schemes such as branch-and-cut (B\&C) \cite{mendezdiaz2006} and branch-and-price (B\&P) \cite{trick1996}.
Novel features added to these classical approaches have consistently increased the size of the solved instances of coloring problems,
e.g.~\cite{
Furini2017, malaguti2012}. 


\medskip

In some applications, it becomes necessary to consider some additional restrictions, giving rise to variants of colorings \cite{tuza1997,kubale2004}.

For instance, $\mu$-colorings  \cite{bonomo2011} 
arise in applications where resources are ordered by quality or size, and users have a particular minimum quality
or size requirement.
As an example, the problem of assigning classrooms (with different capacities) to courses (with different number of students). 


Formally, given a graph $G=(V,E)$, a set of colors $\CC=\{1,\ldots, n\}$ with $n\in \NN$, and a vector $\mu \in \CC^V$, a \emph{$\mu$-coloring} of $G$ is a coloring $f$ of $G$ such that $f(v) \leq \mu(v)$, for all $v \in V$. 
The $\mu$-Coloring Problem consists of finding a $\mu$-coloring $f$ of $G$ with minimum cardinality of $\AC(f)$. 

In the context of scheduling and very large-scale integration, some problems can be modeled as extending a given \emph{partial coloring} in a graph, i.e.~finding a coloring of a graph where some vertices have a preassigned color \cite{biro1992}.
Formally, in the Precoloring Extension Problem, the input is a graph $G=(V,E)$, a set of colors $\CC$, a subset $W \subset V$, and a coloring $f'$ of the subgraph of $G$ induced by $W$. The goal is to find a coloring $f$ of $G$ such that $f(v) = f'(v)$ for all $v \in W$, with minimum cardinality of $\AC(f)$. Precoloring Extension Problem also models theoretical problems such as the Quasigroup Completion 
\cite{quasigroup}. 
Moreover, solving GCP, a well-known and highly-effective preprocessing stage consists of coloring a maximal clique of the graph, transforming the GCP instance into an instance of Precoloring Extension Problem.

\medskip

In this paper we address a coloring problem which generalizes all those mentioned previously: the List Coloring Problem (LCP).
In the LCP each vertex $v$ of $G$ has a non-empty preassigned \emph{list of colors} $L(v)\subset \mathcal C$. Instances are denoted by $(G,L)$.
A \emph{list coloring} of $(G,L)$ is a 
coloring $f$ of $G$ with the additional condition that $f(v) \in L(v)$ for all $v\in V$. W.l.o.g.~we can consider
$\CC= \bigcup_{v\in V} L(v)$. 
As usual, LCP consists of finding a list coloring of $(G,L)$ with minimum cardinality of $\AC(f)$.

Clearly, GCP and $\mu$-Coloring Problem are equivalent to LCP on instances where, for all $v\in V$,  $L(v)=\mathcal C$ and $L(v)=\{1,\ldots, \mu(v)\}$, respectively. In addition, the Precoloring Extension Problem is equivalent to LCP on instances where $L(v) = \{f'(v)\}$ for all $v \in W$ and 
$L(v) = \CC$ for all $v \in V \setminus W$.

\medskip

There are practical situations where we need to consider weights (or costs) on colors
leading to the \emph{Weighted List Coloring Problem} (WLCP).
Given a weight vector $w\in \ZZ_+^{\CC}$ and a coloring $f$ of $G$, its weight $w(f)$ is defined as $w(f) \doteq \sum_{j \in \AC(f)} w_j$.
The goal of the WLCP is to find the list coloring $f$ with minimum weight. 
Formally:

\begin{center}
\fbox{\parbox{0.9\textwidth}{
\textsc{Weighted List Coloring Problem} (WLCP)\\[5pt]
\emph{Instance}: a graph $G=(V,E)$, a set of finite lists $L=\{ L(v):v\in V\}$,
and $w\in \ZZ^\CC_+$.\\[5pt]
\emph{Objective}: obtain a list coloring $f$ of $(G,L)$ such that $w(f)$ is minimum or answer that such a coloring does not exist.}}\\ 
\end{center}

Instances of WLCP will be denoted by $(G,L,w)$. 
Clearly, instances of the LCP can be seen as particular instances of the WLCP with $w_c=1$ for all $c\in \mathcal C$. 
It is interesting to point out that the \emph{weighted} version of GCP is equivalent to the GCP. 
Indeed, an optimal solution can be obtained from a $\chi(G)$-coloring of $G$ where the active colors are
those with the first $\chi(G)$ lowest weights.
It is easy to see that this is not the case for general instances of WLCP.

\medskip


\medskip

Regarding  computational complexity, since GCP is $\NP$-hard in general graphs, LCP (as well as WLCP) also is. However, some known results seem to reveal that LCP is  \emph{harder} than GCP.

In particular, feasibility is not easy to handle. 
The well-known example presented in Figure \ref{fig:infeas} corresponds to an infeasible instance of the LCP where all the vertices have $\chi(G)$ elements in their color lists.
\begin{figure}[h]
	\centering
		\includegraphics[scale=0.2]{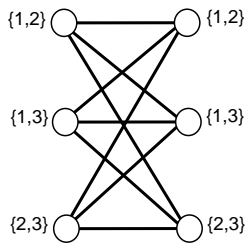}
		\caption{An infeasible instance of LCP. Color lists are displayed between braces.} \label{fig:infeas}
\end{figure}

Deciding  whether $(G,L)$ has a list coloring is an $\NP$-complete problem, even when restricted to instances where the size of the lists is at most 3 and $G$ is a cograph or complete bipartite \cite{golovach2017}. 
In contrast, GCP can be solved in linear time for cographs \cite{CHVATAL} and bipartite graphs. 
Besides, for  graphs $G$ with bounded treewidth $t$, there is a dynamic programming algorithm that finds its chromatic number in polynomial time, while asking if $G$ has a list coloring is W[1]-Hard \cite{Computcomplex}.
Despite this, there is an effort to design good algorithms that find a list coloring, whenever it exists. As far as we know, the best one is $O(2^n)n^{O(1)}$ for a graph of $n$ vertices and lists of arbitrary size \cite{husfeldt2009} .




\medskip 

A practical application of WLCP can be found in the design of workdays of drivers in a public transport company \cite{ejs2018}.
In addition, besides coloring problems, the WLCP also generalizes other optimization problems, e.g. the well-known Minimum Weighted Set Cover Problem.
In the next section, we will analyze the relationship between these two problems.

In conclusion, the development of a tool to solve general instances of the WLCP
would allow addressing a wide range of combinatorial optimization problems and  applications.

\medskip 

In this work, we propose an exact algorithm for the WLCP based on an ILP formulation. This research extends some of the ideas presented by the authors in the conference paper \citet{entcs2019}.

In Section \ref{Sec:Preliminaries} necessary notations and definitions are presented, as well as some theoretical results on which our algorithm is based.  The ILP formulation with an exponential number of variables is presented in Section \ref{ilpsc}, based on the characterization of list colorings as covers of  vertices by stable sets.

In Sections \ref{Sec:Solving:LR}, \ref{Sec:branchingstrategies}, and  
\ref{Sec:branchingvariable}, the main components of our B\&P algorithm are presented: pricing routing, branching strategies, and variable selection rules.

The lack of benchmark instances for the LCP and the variety of relevant parameters that define them led us to design 5 sets of instances that cover a wide variety of structures presented in Section \ref{Sec:Instances}. Extended computational experiences are evaluated on these sets of instances in Section \ref{Sec:computationalexperiences}.

Finally, in Section \ref{Sec:Conclusions} we present some conclusion and pending tasks.


\section{
Preliminaries}\label{Sec:Preliminaries}

Given a graph $G=(V,E)$ and $v\in V$, $N_G(v)$ denotes the set of vertices adjacent to $v$, and $N_G[v] = N_G(v) \cup \{v\}$.
Given $U\subset V$, $G[U]$ denotes the subgraph of $G$ induced by $U$ and $G - U=G[V\setminus U]$. When $U=\{u\}$ we simply denote $G-u$.
A subset of vertices is a \emph{stable set} (resp. \emph{clique}) of $G$ if no (resp. all) pair of vertices in $S$ are adjacent to each other. 

Given an instance $(G,L,w)$ of WLCP, for each $j \in \CC \doteq \bigcup_{v\in V} L(v)$, we define $V_j \doteq \{ v \in V : j \in L(v) \}$ and $G_j \doteq G[V_j]$, the \emph{color-vertex set} and the
\emph{color graph} (associated to $j$), respectively. Additionally, 
$\ST_j$ is the set of non-empty stable sets of $G_j$.

Observe that instances of WLCP can also be described by giving the color-vertex sets instead of the lists of colors. 
Indeed, a 3-tuple $(G,\mathcal V, w)$ with $G=(V,E)$, $\mathcal V=\{V_j\subset V: j\in \CC\}$, and $w\in \ZZ_+^{\CC}$  corresponds to the instance $(G,L,w)$ where  $L(v) = \{j \in \CC : v \in V_j \}$.
Since both representations are equivalent,  we will use one or the other interchangeably depending on the context.

\bigskip

The following result reveals that the structure of an instance $(G,L)$ of LCP depends on  color graphs  rather than $G$,  as edges with endpoints in different color-vertex sets can be added to (or removed from) $G$ without modifying the set of list colorings.

\begin{lemma} \label{uselessedges}
Let $(G,L)$ with $G=(V,E)$ be an instance of the LCP and  
$u,v \in V$ such that $(u,v)\notin E$ and $L(u) \cap L(v) = \emptyset$.
Let $G'$ be the graph obtained by adding the edge $(u,v)$ to $G$.
Then, 
$f$ is a list coloring of $(G,L)$ if and only if $f$ is a list coloring of $(G',L)$. 
\end{lemma}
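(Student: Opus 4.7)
The plan is to prove the two implications separately, observing that the set of lists $L$ is unchanged between $(G,L)$ and $(G',L)$, so the membership condition $f(w)\in L(w)$ for all $w\in V$ holds for $f$ with respect to one instance iff it holds with respect to the other. Thus only the proper-coloring constraints need to be compared, and since $E(G)\subset E(G')$ with the single extra edge $(u,v)$, the whole argument reduces to showing that any list assignment respecting $L$ automatically separates the colors of $u$ and $v$.

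For the reverse direction ($\Leftarrow$), I would argue directly: if $f$ is a list coloring of $(G',L)$, then $f$ is proper on $G'$ and hence on every subgraph of $G'$; in particular it is proper on $G$. Combined with $f(w)\in L(w)$ for all $w$, this makes $f$ a list coloring of $(G,L)$. This direction is immediate and requires no use of the hypothesis $L(u)\cap L(v)=\emptyset$.

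For the forward direction ($\Rightarrow$), suppose $f$ is a list coloring of $(G,L)$. To show it is a list coloring of $(G',L)$, I need to verify that $f(a)\neq f(b)$ for every edge $(a,b)\in E(G')$. For edges $(a,b)\in E$ this is inherited from $f$ being a coloring of $G$; the only new edge to check is $(u,v)$. Here I would use the key hypothesis: $f(u)\in L(u)$ and $f(v)\in L(v)$, and since $L(u)\cap L(v)=\emptyset$, these two colors cannot coincide, so $f(u)\neq f(v)$.

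There is no real obstacle here; the statement is essentially a bookkeeping observation that edges between vertices with disjoint lists are redundant from the list-coloring standpoint. The only subtlety worth flagging explicitly is that the hypothesis $L(u)\cap L(v)=\emptyset$ is used exclusively in the forward direction to rule out a potential violation at the newly added edge.
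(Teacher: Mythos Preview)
Your proof is correct and follows essentially the same approach as the paper: the forward direction uses $f(u)\in L(u)$, $f(v)\in L(v)$, and $L(u)\cap L(v)=\emptyset$ to force $f(u)\neq f(v)$ at the new edge, while the reverse direction is immediate from $E\subset E(G')$. The paper's write-up is terser (it dismisses the converse as ``straightforward''), but the argument is the same.
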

\begin{proof}
Let $f$ be a list coloring of $(G,L)$. 
Since $f(u) \in L(u)$, $f(v) \in L(v)$ and, by hypothesis, $L(u) \cap L(v) = \emptyset$, then $f(u) \neq f(v)$.
Therefore, $f$ is a list coloring of $(G',L)$.

The converse is straightforward.
\end{proof}

Based on the structure of the color graphs,  we present two families of instances that are \emph{opposite} in terms of computational complexity. We prove that the WLCP is polynomial-time solvable when all color graphs are complete graphs and remains $\NP
$-hard when color graphs are edgeless graphs.

The former can be reduced to the \emph{Minimum Weighted Perfect Matching Problem} (MWPMP) on bipartite graphs, which can be solved in polynomial time by the Hungarian algorithm (see e.g.~\cite{edmonds1972}). Indeed, we have:

\begin{prop} \label{completeispoly}
The WLCP is polynomial-time solvable on instances $(G,L,w)$ such that, for all $j \in \CC$, $G_j$ 
is a complete graph.

\end{prop}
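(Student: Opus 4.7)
The plan is to reduce the WLCP on such instances to the Minimum Weighted Perfect Matching Problem on a bipartite graph, which is polynomial-time solvable via the Hungarian algorithm.

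The starting observation is structural. Since $G_j$ is complete, every stable set of $G_j$ has cardinality at most one. In any list coloring $f$ of $(G,L)$, the color class $f^{-1}(j)$ is a stable set of $G_j$ (a vertex colored $j$ must lie in $V_j$, and adjacent vertices receive different colors), so $|f^{-1}(j)| \leq 1$ for every $j \in \CC$. Hence each color is used at most once by $f$, which reduces the search for a list coloring to looking for an \emph{injective} assignment of each $v \in V$ to a color in $L(v)$.

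Next, I would build the bipartite graph $H$ with parts $V$ and $\CC$ and edge set $E_H = \{(v,j) : v \in V, \; j \in L(v)\}$, assigning weight $w_j$ to each edge $(v,j)$. A matching $M$ of $H$ that saturates $V$ corresponds to a list coloring $f_M$ defined by $f_M(v)=j$ whenever $(v,j) \in M$: membership of $(v,j)$ in $E_H$ ensures $f_M(v) \in L(v)$, and the matching property ensures $f_M$ is injective, so it is automatically proper. Since every active color is matched exactly once, $w(M) = \sum_{v \in V} w_{f_M(v)} = \sum_{j \in \AC(f_M)} w_j = w(f_M)$. Thus minimizing $w(f)$ over list colorings is equivalent to minimizing the weight of a matching of $H$ that saturates $V$.

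Finally, I would solve this constrained matching problem in polynomial time by the standard reduction to a balanced assignment problem: if $|\CC| < |V|$ the instance is clearly infeasible, and otherwise one may pad the $V$-side with $|\CC|-|V|$ dummy vertices joined to every color by zero-weight edges, so that a minimum-weight perfect matching of the padded graph always induces an optimal matching saturating $V$ in $H$. The Hungarian algorithm returns such a matching in polynomial time, and translating it back via the correspondence above yields an optimal list coloring (or certifies infeasibility if $V$ cannot be saturated in $H$). The only mildly delicate step is arranging the padding/weights so that optimal matchings of the padded graph faithfully encode list colorings, but this is routine rather than an obstacle.
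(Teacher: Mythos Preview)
Your proof is correct and follows essentially the same approach as the paper: both reduce to a minimum-weight perfect matching problem on a bipartite graph with parts $V$ (padded by $|\CC|-|V|$ dummy vertices joined to every color by zero-weight edges) and $\CC$, and both invoke the Hungarian algorithm. The correspondence you set up between $V$-saturating matchings and injective list colorings is exactly the one the paper uses.
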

\begin{proof}
Let $G=(V,E)$ and $(G,L,w)$ be an instance of the WLCP such that, for all $j \in \CC$, $G_j$ 
is a complete graph.
Since $(G,L,w)$ is infeasible when $|V| > |\CC|$, we assume that
$|V| \leq |\CC|$. 

Let $Z$ be a set of dummy elements such that $|Z| = |\CC| - |V| $. We construct the bipartite graph $H$ whose vertex partition is defined by the sets $V \cup Z$ and $\CC$. Moreover, each $v \in V$ is adjacent to every $j \in L(v)$ and each $z \in Z$ is adjacent to every $j \in \CC$.
In addition, edges $(v,j)$ with $v \in V$ have weight $w_j$ while edges $(z,j)$ with $z \in Z$ have weight zero.

Given a perfect matching $M$ of $H$, let $f_M : V \rightarrow \CC$ be such that $f_M(v) = j$ if $(v,j) \in M$.
By construction, $f_M$ is a list coloring of $(G,L)$ and $w(f_M) = w(M)$, where $w(M)$ is the weight of $M$.

Conversely, given a list coloring $f$ of $(G,L)$ we construct $M_f = \{ (v,f(v)) : v \in V \}$. Clearly, $M_f$ is a matching of $H$ such that $w(M_f)=w(f)$. Let $g$ be a one-to-one correspondence between $Z$ and $\CC\setminus f(V)$. To obtain a perfect matching $M$ of $H$, we add to $M_f$ arcs $(z,g(z))$, for all $z\in Z$. Clearly, $w(M)=w(f)$. 

Therefore, $(G,L,w)$ is an infeasible instance of $WLCP$ if and only if $H$ has no perfect matching. Moreover, if $M$ is a minimum weighted 
perfect matching of $H$ then $f_M$ is an optimal list coloring of $(G,L,w)$.  
\end{proof}

For the case of edgeless color graphs, we obtain a polynomial-time reduction from the Minimum Weighted Set Cover Problem (MWSCP).


Given a set $X$ and a family $\mathcal S$ of subsets of $X$ such that $\bigcup_{S \in \mathcal{S}} S = X$ a \emph{set cover of} $X$ is a subset $\mathcal S'$ of $\mathcal S$ such that $\bigcup_{S \in \mathcal{S}'} S = X$. 
In the MWSCP, the elements of $\mathcal S$ have weights, given by a vector $w\in \ZZ_+^{\mathcal S}$, and the objective is to find a set cover $\mathcal S'$ of $X$ with minimum weight $\sum_{S\in \mathcal S'} w_S$. The MWSCP is $\NP$-hard even for the unweighted case ($w=\mathbf 1$), with sets of size 3 (see the problem [SP5] of \cite{GAREY}).


We can prove:

\begin{prop} \label{emptyisnp}
The LCP is $\NP$-hard on instances $(G,L)$ where, for all $j \in \CC$, $G_j$ 
is an edgeless graph.
\end{prop}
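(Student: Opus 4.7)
The plan is to give a polynomial-time reduction from the (unweighted) Minimum Set Cover Problem (MSCP), which is $\NP$-hard by [SP5] in \cite{GAREY}, to the LCP restricted to instances with edgeless color graphs. Given an instance $(X,\mathcal S)$ of MSCP, I would construct an LCP instance $(G,L)$ as follows: take $V \doteq X$ and $E\doteq \emptyset$; take the color set $\CC \doteq \mathcal S$; and for each $x\in X$ set $L(x)\doteq \{S\in\mathcal S : x\in S\}$. Since $\bigcup_{S\in\mathcal S} S = X$, every list $L(x)$ is nonempty, so this is a valid LCP instance. For each $j=S\in\CC$, the color graph $G_j = G[V_j] = G[S]$ is a subgraph of the edgeless graph $G$, hence edgeless, so $(G,L)$ indeed belongs to the class under consideration.

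Next I would establish a correspondence between list colorings of $(G,L)$ and set covers of $X$. Since $E=\emptyset$, any function $f:V\to \CC$ with $f(x)\in L(x)$ for all $x$ is automatically a list coloring. Observe that $f(x)\in L(x)$ means $x\in f(x)$, so $\AC(f)\subseteq \mathcal S$ satisfies $\bigcup_{S\in \AC(f)} S \supseteq X$; that is, $\AC(f)$ is a set cover of $X$. Conversely, given a set cover $\mathcal S'\subseteq \mathcal S$, for each $x\in X$ pick some $S_x\in \mathcal S'$ with $x\in S_x$ and define $f(x)\doteq S_x$; this is a list coloring of $(G,L)$ with $\AC(f)\subseteq \mathcal S'$, hence $|\AC(f)|\leq |\mathcal S'|$.

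Combining both directions, the minimum of $|\AC(f)|$ over list colorings $f$ equals the minimum cardinality of a set cover of $X$ by elements of $\mathcal S$. Because the construction of $(G,L)$ from $(X,\mathcal S)$ is clearly polynomial in the size of the MSCP instance, this furnishes a polynomial-time reduction from MSCP to LCP on instances with edgeless color graphs, proving the claimed $\NP$-hardness.

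There is no serious obstacle here: the argument is essentially a bijection between assignments and covers, and the only point requiring a brief verification is that taking $G$ to be edgeless is legitimate (which is immediate, since the edgeless-color-graph hypothesis is a property only of the induced subgraphs $G_j$, and is satisfied vacuously when $G$ itself has no edges). One could alternatively invoke Lemma~\ref{uselessedges} to add edges between vertices with disjoint lists if a nontrivial underlying graph were desired, but this is not needed for the reduction.
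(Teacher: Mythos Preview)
Your proof is correct and follows essentially the same reduction as the paper: both take $G=(X,\emptyset)$ with $L(x)=\{S\in\mathcal S : x\in S\}$ and identify list colorings with set covers via $\AC(f)$. Your write-up is in fact a bit more explicit in verifying that the lists are nonempty and that the color graphs are edgeless, but the argument is the same.
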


\begin{proof}
Let $(X, \mathcal S, \mathbf 1)$ be an instance of MWSCP. We construct the LCP instance  $(G,L)$ where $G=(X,\emptyset)$ and $L(v) = \{ S \in \mathcal S : v \in S\}$ for all $v\in X$. 

Given a list coloring  $f$ of $(G,L)$, clearly $\mathcal A (f)$
is a set cover of $X$.
Conversely, given a set cover  $\mathcal S'$  of $X$, there exists a function $\tilde f: X \to \mathcal S'$ such that,
for all $v\in X$,  
$v\in \tilde f(v)$. Clearly, $\tilde f$ is a list coloring of $(G,L)$.
Moreover, if $f$ is an optimal list coloring of $(G,L)$ we have
$|\mathcal S'|\geq |\mathcal A(\tilde f)| \geq |\mathcal A(f)|$.
Then $\mathcal A(f)$ is an optimal set cover of $X$.

Observe that we also prove that  $(X, \mathcal S, \mathbf 1)$ is infeasible if and only if $(G,L)$ also is. 

\end{proof}

\medskip

It is known that given a coloring $f$ of a graph $G$ and a color $j\in \mathcal A(f)$, $f^{-1}(j)$ is a stable set of $G$. This condition allows defining colorings as partitions of the set of vertices into stable sets.
In the case of list colorings, 
$f^{-1}(j)$ have to be a stable set of $G_j$, i.e. $f^{-1}(j)\in \ST_j$. 

However, 
to recover a list coloring from a partition of vertices into stable sets it is necessary to identify the color assigned to each stable set in the partition. 
For example, consider $G$ being a path of vertices 
$V=\{v_i: 1\leq i\leq 5 \}$ and edges $E=\{(v_i, v_{i+1}): 1\leq i\leq 4\} $, $\CC=\{1,2,3\}$, $V_1=\{v_1,v_5\}$,  $V_2=\{v_2,v_3,v_4\}$, and $V_3=\{v_1,v_2,v_3,v_4\}$. The partition of $V$ given by  the stable sets $S=\{v_1,v_5\}$, $S'=\{v_2,v_4\}$, and $S''=\{v_3\}$ defines two possible list colorings of $G$  since the vertices of $S'$ and $S''$ can interchange colors 2 and 3.

Thus, denoting by $X$ the set of pairs $(S,j)$ with $j\in \CC$ and $S\in \ST_j$, a list colorings can be characterized as a set $\hat{X}\subset X$ such that 
$\mathcal S=\{S: (S,j)\in \hat{X}, \text{ for some } j\in \CC \}$ is a partition of the set of vertices, for all $S \in \mathcal S$,
$|\{j: (S,j) \in \hat X\}|= 1$,  and,  for all $j\in \CC$, $|\{S: (S,j) \in \hat X\}| \leq 1$. 

This  characterization of list colorings 
becomes \emph{oversized} in some cases. 
The extreme case occurs for instances coming from the GCP. 
There, $V_j = V$ for all $j\in \CC$ and it is not necessary to decide which color is associated with each stable set of the partition. 
In other words, in instances of the GCP, all colors are \emph{indistinguishable} from each other.

Indistinguishable colors might also appear in general instances of the WLCP, and it is possible to take advantage of this characteristic.
Let us first formalize this concept. 

\begin{defi}
Given an instance $(G,\mathcal V,w)$ of the WLCP, we say that two colors $j, k\in \CC$ are \emph{indistinguishable}, and we denote it by $j\cong k$,
if $V_j = V_{k}$ and $w_j=w_{k}$.
\end{defi}

Observe that $\cong$ is an equivalence relation on $\CC$. From now on, $\{\CC_j : j \in K \}$ denotes the partition of $\CC$ into the equivalence classes defined by $\cong$, where $K$ is a subset of $\CC$ that has exactly one representative color for each equivalence class.
So, $K \cap \CC_j = \{j\}$ for
each $j \in K$.
We say that each representative color $j\in K$ has a \emph{multiplicity} $m(j) \doteq |\CC_j|$.
Also, for a given vertex $v$, we denote $k(v) \doteq |K \cap L(v)|$, which is the number of representative colors that can be assigned to $v$ in a list coloring.

Then, an instance of the WLCP can also be described by $(G, K, \mathcal{V}, m, w)$, where $\mathcal{V}=\{ V_j\subset V: j\in K\}$ and $m, w \in \ZZ_+^K$. 

Besides, the characterization of list colorings as partitions of vertices into stable sets can be reformulated as follows:

\begin{rem}\label{indist} 
Given an instance  $(G, K, \mathcal{V}, m, w)$ of WLCP, if $X$ is the set of pairs $(S,j)$ with $j\in K$ and $S\in \ST_j$, every list coloring $f$ can be represented by a set $\hat{X}\subset X$ such that 
$\mathcal S=\{S: (S,j)\in \hat{X}, \text{ for some } j\in K\}$ is a partition of the set of vertices, 
for all $S \in \mathcal S$,
$|\{j: (S,j) \in \hat X\}|= 1$,  and,  for all $j\in \CC$, $|\{S: (S,j) \in \hat X\}| \leq m(j)$. 
Moreover, $w(f)=\sum_{ (S,j)\in \hat X} w_j$.
\end{rem}

Observe that this last characterization of list colorings coincides with the classical one when it is applied to instances of the GCP. Indeed, since in these cases all colors are indistinguishable we have $|K| = 1$. W.l.o.g.~we can assume $K=\{1\}$, $m(1) = |\CC|$, and $\ST_1$ coincides with the set of stable sets of the graph.  

\medskip 

Let us introduce an example of the aforementioned concepts. Figure \ref{fig:C4a} shows an instance of the LCP where the lists of colors are shown in braces. It is easy to verify that 
we can consider $K = \{1,3,7\}$ and  
the equivalence classes given by $\mathcal{C}_1 = \{1,2\}$, $\mathcal{C}_3 = \{3,4,5,6\}$, $\mathcal{C}_7 = \{7\}$ with multiplicities $m(1)=2$, $m(3)=4$, and $m(7)=1$.
Their color graphs $G_1=G$, $G_3$ and $G_7$ are shown in Figures \ref{fig:C4b}, \ref{fig:C4c}, and \ref{fig:C4d}, respectively.

The list coloring $f$ given by $f(v_1) = 3$, $f(v_2)= f(v_3) = 4$ and $f(v_4) = 1$ and the list coloring $g$ given by $g(v_1) = 3$, $g(v_2)= g(v_3) = 5$ and $g(v_4) = 1$, having the same weight, are both represented by $\hat X = \{(\{v_1\},3),(\{v_2,v_3\},3),(\{v_4\}, 1)\}$. 

\begin{figure}
\centering

\subcaptionbox{Instance $(G,L)$.\label{fig:C4a}}{
\begin{tikzpicture}
\node[circle, fill, inner sep=2pt, label={[align=center]below: $v_3$\\ \scriptsize $\{1,2,3,4,5,6\}$}] (1) at (0,0) {};
\node[circle, fill, inner sep=2pt, label={[align=center]below: $v_4$\\ \scriptsize $\{1,2\}$}] (2) at (2,0) {};
\node[circle, fill, inner sep=2pt, label={[align=center]above: \scriptsize$\{1,2,3,$\\[-3pt] \scriptsize~$4,5,6\}$\\$v_2$}] (3) at (2,2) {};
\node[circle, fill, inner sep=2pt, label={[align=center]above: \scriptsize $\{1,2,3,4,$\\[-3pt] \scriptsize~$5,6,7\}$\\ $v_1$}] (4) at (0,2) {};
\draw (1) -- (2) -- (3) -- (4) -- (1);
\end{tikzpicture}
}
\subcaptionbox{$G_1 = G_2$.\label{fig:C4b}}{
\begin{tikzpicture}
\node[circle, fill, inner sep=2pt, label={[align=center]below: $v_3$\\ \scriptsize \color{white}$\{1\}$}] (1) at (0,0) {};
\node[circle, fill, inner sep=2pt, label={[align=center]below: $v_4$}] (2) at (2,0) {};
\node[circle, fill, inner sep=2pt, label={[align=center]above: $v_2$}] (3) at (2,2) {};
\node[circle, fill, inner sep=2pt, label={[align=center]above: $v_1$}] (4) at (0,2) {};
\draw (1) -- (2) -- (3) -- (4) -- (1);
\end{tikzpicture}
}
\subcaptionbox{$G_3 = G_4 = G_5 = G_6$.\label{fig:C4c}}[0.3\textwidth]{
\begin{tikzpicture}
\node[circle, fill, inner sep=2pt, label={[align=center]below: $v_3$\\ \scriptsize \color{white}$\{1\}$}] (1) at (0,0) {};
\node[circle, fill, inner sep=2pt, label={[align=center]above: $v_2$}] (3) at (2,2) {};
\node[circle, fill, inner sep=2pt, label={[align=center]above: $v_1$}] (4) at (0,2) {};
\draw (1) -- (4) -- (3);
\end{tikzpicture}
}
\subcaptionbox{$G_7$.\label{fig:C4d}}[0.1\textwidth]{
\begin{tikzpicture}
\node[circle, inner sep=2pt, label={[align=center]below: \color{white}$v_3$\\ \scriptsize \color{white}$\{1\}$}] (1) at (0,0) {};
\node[circle, fill, inner sep=2pt, label={[align=center]above: $v_1$}] (4) at (0,2) {};
\end{tikzpicture}
}

\caption{Example of a LCP instance.}
\label{fig:C4}
\end{figure}
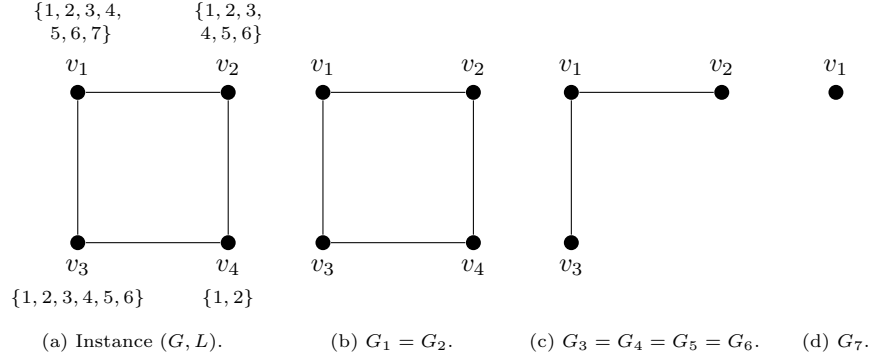


\medskip

In the following, we  present some remarks and results that will allow to reduce instances in a preprocessing stage.



As in the case of the Precoloring Extension Problem, there may be instances of the WLCP where some vertices have a single representative color in their lists.
Particular instances of WLCP where, for some $j \in K$, $k(v)=1$ for all $v\in V_j$ (equivalently, $V_j$ does not intersect any other representative color-vertex set), can be reduced according to the following remark.


\begin{rem} \label{porcolor}
Let $(G,L,w)$ be an instance of WLCP and $j\in K$ such that $k(v)=1$ for every $v\in V_j$. Clearly, if $m(j) < \chi(G_j)$, the instance is infeasible. Otherwise, solving WLCP on $(G,L,w)$ is equivalent to  solve 
 GCP on $G_j$ and WLCP on $(G-V_j, L|_{V'}, w|_{\mathcal{C}'})$ where $V'=V\setminus V_j$ and $\mathcal C'= \mathcal{C}\setminus \mathcal C_j$. The optimal list coloring of $(G,L,w)$ can be derived as the union of both optimal colorings. Therefore, if $f^*$ is an optimal list coloring of $(G-V_j, L|_{V'}, w|_{\mathcal C'})$, the optimal objective value of $(G,L,w)$ is $w_j.\chi(G_j)+w(f^*)$.  \end{rem}

It is also possible to reduce instances where only some  vertices have a single representative color in their lists. 
Let $j \in K$ and $v \in V_j$ with $k(v) = 1$.
Observe that every list coloring assigns to $v$ a color from $\CC_j$ but it is not relevant which one, since all of them are indistinguishable.
Therefore, we can take advantage of this fact and \emph{precolor} $v$ with $j$, since colors in $\CC_k \setminus \{j\}$ would lead to colorings with the same weight.
An instance where $v$ is precolored with $j$ can be thought of as a new instance of the WLCP without $v$ and without the possibility of coloring the neighbors of $v$ with $j$. 
Moreover, any isolated vertex $u$ of $G_j$ can also be precolored with $j$, since for all $w \in N(u)$, $w \notin V_j$.
These ideas can be generalized as follows:

\begin{lemma}\label{preproceso}
Let $(G,L,w)$ be an instance of WLCP with $G=(V,E)$, $j\in K$, and $Q$ be a maximal clique of the subgraph of $G_j$ induced by $\{v \in V_j : k(v) = 1\}$.
Then, if $m(j)<|Q|$, $(G,L,w)$ is infeasible. Otherwise, solving the WLCP on $(G,L,w)$ can be reduced to solve an instance of WLCP on some proper induced subgraph of $G$. 
\end{lemma}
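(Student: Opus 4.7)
The plan is to handle the two claims separately: infeasibility when $m(j)<|Q|$, and otherwise the construction of a reduced WLCP instance on $G-Q$ whose optimum determines that of $(G,L,w)$.

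For the first claim, I would observe that every $v\in Q$ lies in $V_j$ with $k(v)=1$. Since all colors of $\mathcal{C}_j$ share the same color-vertex set $V_j$, we have $\mathcal{C}_j\subseteq L(v)$; and since $k(v)=1$, no color outside $\mathcal{C}_j$ belongs to $L(v)$. Hence $L(v)=\mathcal{C}_j$ for every $v\in Q$. Because $Q$ is a clique of $G_j\subseteq G$, any list coloring must assign $|Q|$ pairwise distinct colors of $\mathcal{C}_j$ to the vertices of $Q$, forcing $m(j)=|\mathcal{C}_j|\ge|Q|$.

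Assume now $m(j)\ge|Q|$. Writing $Q=\{v_1,\dots,v_{|Q|}\}$ and fixing $|Q|$ distinct colors $c_1,\dots,c_{|Q|}\in\mathcal{C}_j$, I would consider the instance $(G',L',w')$ with $G'=G[V\setminus Q]$ (a proper induced subgraph, the edge case $Q=\emptyset$ being trivial), lists $L'(u)=L(u)\setminus\{c_i:(u,v_i)\in E\}$ for $u\in V\setminus Q$, and weights $w'_{c_i}=0$ for $i=1,\dots,|Q|$ together with $w'_c=w_c$ otherwise. The idea is that the vertices of $Q$ are pre-assigned to $c_1,\dots,c_{|Q|}$: the zero weights reflect that these colors are automatically active in $(G,L,w)$ through $Q$, so their potential reuse inside $V\setminus Q$ should be cost-free.

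Next I would establish the correspondence. In the forward direction, the extension $f$ of a list coloring $f'$ of $(G',L',w')$ defined by $f(v_i)=c_i$ and $f|_{V\setminus Q}=f'$ is a valid list coloring of $(G,L,w)$: edges inside $Q$ are safe by distinctness of the $c_i$'s, and each boundary edge $(v_i,u)$ with $u\in V\setminus Q$ is safe because $c_i\notin L'(u)$; moreover $w(f)=w'(f')+|Q|\,w_j$. In the reverse direction, given any list coloring $\tilde f$ of $(G,L,w)$, the $|Q|$ distinct colors $\tilde f(v_1),\dots,\tilde f(v_{|Q|})\in\mathcal{C}_j$ can be relabeled to $c_1,\dots,c_{|Q|}$ by a permutation $\pi$ of $\mathcal{C}_j$ (extended by identity on $\mathcal{C}\setminus\mathcal{C}_j$); then $f:=\pi\circ\tilde f$ is a list coloring of $(G,L,w)$ of weight $w(\tilde f)$, and $f|_{V\setminus Q}$ is a valid list coloring of $(G',L',w')$ of weight $w(\tilde f)-|Q|\,w_j$. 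Consequently the two optima differ by $|Q|\,w_j$ and the feasibilities coincide.

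The hard step in my view is the reverse direction, which relies critically on the indistinguishability of the colors of $\mathcal{C}_j$: the equalities $V_c=V_j$ and $w_c=w_j$ for all $c\in\mathcal{C}_j$ are exactly what make the relabeling $\pi$ preserve every list ($\pi(L(v))=L(v)$) and the total weight ($w_{\pi(c)}=w_c$). Without indistinguishability, a plain restriction of $\tilde f$ to $V\setminus Q$ might fail to land in $L'$, because $\tilde f$ could use inside $V\setminus Q$ precisely the colors of $\mathcal{C}_j$ assigned to $Q$.
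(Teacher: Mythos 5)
Your proof is correct and follows essentially the same route as the paper: precolor the clique $Q$ with $|Q|$ distinct colors of $\mathcal{C}_j$, set their weights to zero, delete them from the lists of the neighbours of $Q$, and use a list- and weight-preserving permutation of $\mathcal{C}_j$ (indistinguishability) to justify the ``w.l.o.g.''~step in the reverse direction. The only difference is that the paper's reduction additionally removes the vertices of $V_j \setminus Q$ that become isolated in $G_j - Q$, colouring them for free with an unused colour assigned to $Q$ — a refinement that strengthens the preprocessing but is not needed for the statement as phrased.
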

\begin{proof}
Let $\tilde I_j = \{v \in V_j \setminus Q: N_{G_j}(v) \subset  Q , \,  N_{G_j}(v) \neq  Q \}$ and $V' = V \setminus (Q \cup \tilde I_j)$.  Observe that every vertex in $\tilde I_j$ is an isolated vertex in $G_j - Q$. 

Since $|Q|$ different colors of $\mathcal C_j$ are needed to color $Q$, the instance is infeasible if $m(j)<|Q|$. 
Otherwise, let $g$ be a coloring of $G[Q]$ with colors in $\mathcal C_j$.

 We consider the instance $(G', L', w')$ of WLCP where $G'=G[V']$, $L'(v) = L(v) \setminus g(Q \cap N(v))$ for all $v \in V'$, and $w'_k=0$ for all $k \in g(Q)$,  $w'_k=w_k$ otherwise. 
 
Let $f'$ be an optimal list coloring of $(G', L', w')$.
Defining $f(v) = g(v)$ for all $v \in Q$, $f(v) \in g(Q \setminus N_{G_j}(v))$ for all $v \in \tilde I_j$, and $f(v) = f'(v)$ for all $v \in V'$, a list coloring of $(G,L,w)$ is obtained with $w(f) = w_j.|Q|+w'(f')$.

To prove that $f$ is optimal, let $\tilde f$ be a list coloring of $(G,L,w)$. W.l.o.g. we can assume that $\tilde f(v)=g(v)$ for all $v\in Q$. Then,  $\tilde f|_{V'}$ is a list coloring of $(G', L', w')$ and $$w(\tilde f) = w(\tilde f|_{V'}) + w_j.|Q| \geq w(f') + w_j.|Q| = w(f).$$
Observe that we have also proved that if the instance $(G', L', w')$ is infeasible, then $(G,L,w)$ is also infeasible.
\end{proof}

For instances of WLCP coming from GCP, the well-known preprocess of coloring a maximal clique can be seen as applying this lemma.

\medskip
Given a feasible instance of WLCP, the previous lemma allows us to reduce in polynomial time to an instance where each vertex has at least two representative colors in its list. 

However, it is important to have in mind that, in each application of the lemma, the partition of $\mathcal{C}$ into indistinguishable colors might change. 
In particular, if $m(j) \geq |Q| + 1$, all the colors in $\CC_j\setminus g(Q)$ remain indistinguishable, whereas each $k\in g(Q)$ will belong to other equivalence class. 
Specifically, the new instance has:

\begin{enumerate}
\item for $k \in g(Q)$, $V'_k = V_k \setminus (N_{G_k}[g^{-1}(k)] \cup \tilde I_j)$ and $w'_k = 0$, and
\item for $k \in \CC \setminus g(Q)$, $V'_k = V_k \setminus (Q \cup \tilde I_j)$ and $w'_k = w_k$.
\end{enumerate}

In the example of Figure \ref{fig:C4}, $v_4$ can only be colored with $1 \cong 2$, w.l.o.g suppose the former is picked. 
The application of the above lemma with $Q=\{v_4\}$ conducts to $V'_1 = \{v_1\}$ and $V'_2 = \{v_1, v_2, v_3\}$, which means that colors 1 and 2 are no more indistinguishable.
Indeed, color 1 is now indistinguishable to color 7 and color 2 is now indistinguishable from colors 3, 4, 5, and 6.

\section{Set Covering formulation for WLCP} \label{ilpsc}

\citet{trick1996} present an ILP formulation for the GCP, based on the characterization
of colorings as partitions of vertices into stable sets.
We adapt these ideas to WLCP by considering the Remark \ref{indist}.

Given $(G, K, \mathcal{V}, m, w)$ an instance of WLCP with $G=(V,E)$ and  $X \doteq \{(S,k) : k \in K,~ S \in \ST_k \}$, a list coloring can be represented by binary variables $x_S^k$ for each $(S,k) \in X$, such that $x_S^k = 1$ if all vertices of $S$ are colored with the same color in $\CC_k$ (not necessarily $k$ itself).
Then, the WLCP is formulated as follows:

\begin{align}
\textrm{(WLCP-SC)} \hspace{20 pt} \min \sum_{(S,k)\in X} w_k &  x^k_S  &  \notag \\
s.t. \hspace{47pt} & & \notag \\
\sum_{\mathclap{\substack{(S,k) \in X:\\v \in S}}} \hspace{5pt} x^k_S & \geq 1 &   
  ~v \in V \;  \label{SCRESTR1X}\\[7pt]
\sum_{S \in \ST_k} x^k_S & \leq m(k) & 
  ~k \in K \; 
  \label{SCRESTR2X} \\[7pt]
x^k_S & \in \ZZ_+ & 
	~(S,k)\in X  \label{SCRESTR3X}
\end{align}

Constraints \eqref{SCRESTR1X} assert that, for each vertex $v$, at least one stable set of a color graph containing $v$ is selected. We refer to them as \emph{vertex constraints}.  
Constraints \eqref{SCRESTR2X} allow taking, for each representative color $k\in K$, at most $m(k)$ stable sets of $G_k$.
We refer to them as \emph{color constraints}. 
Regarding the integrality constraints of the variables, since we consider non-negative weights, there is always an optimal solution such that $x^k_S \in \{0,1\}$ for all $(S,k)\in X$.
For feasible instances, integrality constraints in \eqref{SCRESTR3X} and non-negative weights guarantee an optimal solution such that $x^k_S \in \{0,1\}$ for all $(S,k)\in X$.

Observe that, if $m(k) \geq |V_k|$ for some $k\in K$, the corresponding color constraint becomes redundant and can be omitted. 
In particular, for instances coming from the GCP there is a single representative color and its corresponding color constraint is redundant.  
Then, for such instances, WLCP-SC coincides with the set covering formulation for the GCP given in \cite{trick1996}.

Like the set covering formulation for the GCP, the number of variables of WLCP-SC may grow exponentially with the number of vertices of $G$ and, for medium or large instances, it may become
intractable to perform an exhaustive enumeration of them. 
Branch-and-price schemes are the usual way to address the resolution of such models.

The two main components of a $B\&P$ algorithm are the column generation routine to solve lineal relaxations and the branching strategies.
The latter should be \emph{robust}, i.e. the subproblems should also be instances of the WLCP, to reuse the column generation routine.
For a background of this technique in coloring problems, we refer the reader to \cite{trick1996,held2011,toth2011,malaguti2012,hertz2014}. 


In the following sections, we present our proposals for these two components.

\section{Solving the linear relaxation}\label{Sec:Solving:LR}

Given $(G, K, \mathcal{V}, m, w)$ an instance of WLCP with $G=(V,E)$ and  $X \doteq \{(S,k) : k \in K,~ S \in \ST_k \}$, let $LR$ be the linear relaxation of WLCP-SC. 
By multiplying both sides of the color constraints by $-1$, for every $(S,j)\in X$, the column corresponding to variable $x^j_S$ in the coefficient matrix can be written as $(x^S,-e_j)$ where $x^S\in \{0,1\}^V$ is the characteristic vector of $S$ and $e_j\in \{0,1\}^K$ is the $j^{\text{th}}$ canonical vector.

As mentioned, $LR$ can have an exponential number of variables, which needs to be addressed with column-generation techniques.

\subsection{Pricing Routine}

For any $\hat X \subset X$, 
let $LR(\hat X)$ be the linear program obtained from $LR$, by setting all variables in $X\setminus \hat X$ to zero.
Feasible solutions of $LR(\hat X)$ will be denoted as $(\hat x, \mathbf 0)$ with $\hat x\in \RR^{\hat X}$

Let $(\hat x^*, \mathbf 0)$ be an optimal solution of $LR(\hat X)$ and $(\pi^*, \gamma^*)$ be the optimal solution of the dual of $LR(\hat X)$,
where $\pi\in \RR^V$ and $\gamma \in \RR^{K}$ are the dual variables corresponding to vertex and color constraints, respectively.

Then, for any $(S,j)\in X$, the reduced cost of variable $x^j_S$ in $LR$ is
$$c^j_S \doteq  w_j- (\sum_{v\in S} \pi^*_v - \gamma^*_j).$$

In order to know if $(\hat x^*, \mathbf 0)$ is an optimal solution of $LR$, we have to decide if there exists $(S,j)\in X$ such that
$c^j_S < 0$. Then, the pricing routine has to answer the following question:
\begin{center}
\emph{Is there}~$(S,j) \in X \setminus \hat X$ \emph{such that} $\sum_{v\in S} \pi^*_v > \gamma^*_j + w_j $?
\end{center}

The pricing routing reduces to solve, for each $j\in K$, the \emph{Maximum Weighted Stable Set Problem} (MWSSP) 
on $G_j$, with weight
$\pi^*_v$ for each $v\in V_j$. For the case $|K|=1$, it coincides with the one proposed in \cite{trick1996} for GCP. 

Despite the MWSSP being $\mathcal{NP}$-hard, the enumeration routine given in \cite{held2011} shows to be reasonably fast in the context of a B\&P algorithm (see e.g.~\cite{held2011,malaguti2012}). 

\medskip

The enumeration routine returns the first stable set whose weight is greater than a given threshold, and when this is not possible, it returns the stable set with the largest weight.
Our implementation uses a threshold $\beta(\gamma^*_j + w_j)$ with $\beta = 1.1$ (the same value is considered e.g. in \cite{trick1996,malaguti2012}).
At the same time, we avoid the execution of the enumeration routine when there are not enough vertices in $G_j$ to find a stable $S$ satisfying $c^j_S < 0$, e.g.~if $\sum_{v \in V_j} x^*_v \leq w_j + \gamma_j^*$.

Based on preliminary experiments, incorporating many columns at once before re-optimizing decreases the number of iterations of the process and, thus, the overall time.
In our implementation, we search for at most one entering variable $(S,j)$ for each $j \in K$.
Besides, stable sets are greedily expanded to maximal ones in their corresponding color graphs before adding the new variable to $\hat X$.

\subsection{Initialization}\label{inicializacion}

The column generation process needs to start with a set $\hat X\subset X$ for which $LR(\hat X)$ is feasible. 
Unlike GPC,  
such a  set $\hat{X}$ might not exist, i.e. $LR$ may not be feasible.




\medskip

This particularity forces us to design an initialization procedure which either outputs a set $\hat X\subset X$ such that $LR(\hat X)$ is feasible or says that $LR$ is not feasible. 

The initialization procedure works with an \emph{extended formulation} 
which corresponds to an \emph{extended instance} obtained by adding, for each $v\in V$, a \emph{dummy} color, referred to as $k_v$, with color-vertex set $V_{k_v} = \{v\}$ (i.e. $v$ is the unique vertex having $k_v$ in its color list). We denote  the linear relaxation of this extended formulation by $LR_{ext}$.
As $G_{k_v}$ has only one non-empty stable set, $LR_{ext}$ has exactly $|V|$ more variables than $LR$, precisely those
$x^{{k_v}}_{\{v\}}$ for each $v \in V$, referred to as \emph{dummy variables}.

By setting the value of dummy variables to 1 and the others to 0, a feasible solution is always obtained for $LR_{ext}$. 
Hence, $\hat X = \{(\{v\}, k_v) : v \in V\}$ is a valid initial set to start the column generation process to solve $LR_{ext}$.

The relationship between the optimal solution of $LR_{ext}$ and the feasibility of $LR$ depends on the definition of an adequate objective function for $LR_{ext}$, as explained below.

The traditional approach considers the weight of dummy colors equal to 1
and zero for the remaining colors. In this case, $LR$ is feasible if and only if the optimal value of
$LR_{ext}$ is zero.
Moreover,  the column generation procedure for $LR$ can start with $\hat X$ being the set of basic variables of the basic optimal solution of $LR_{ext}$ (where all dummy variables are not basic). 
However, the optimal objective value of $LR_{ext}$ could be very far from $LR$'s, as the optimization of $LR_{ext}$ does not take advantage of the original weights.

Other known approach is to consider
a \emph{big-$M$} value for the weight of dummy colors, i.e. $w_{k_v} = M$ for all $v \in V$, and the original weight $w_k$ for the other colors $k \in K$. We call $LR_{ext}^M$ the linear relaxation of this auxiliary problem. 
The underlying question in this approach is if there exists a value of $M$ such that $LR$ is feasible if and only if $LR_{ext}$ has an optimal solution with all dummy variables at zero. 
Indeed, if such $M$ exists, this optimal solution of $LR_{ext}$ is (by deleting dummy variables) an optimal solution of $LR$.
The following result allows us to prove the existence of such $M$. 


\medskip

\noindent {\bf{Hadamard's inequality  (\cite{Garling2007}):}}\\
Let $B$ be a $n\times n$
$\{-1,0,1\}$-matrix, then $|det(B)|\leq n^{n/2}$.

\medskip

Then, we have the following result:
\begin{lemma}
There exists $M>0$ such that $LR$ is feasible if and only $LR_{ext}^{M}$ 
has an optimal solution with all dummy variables at zero.
Moreover, if $(x^*,\bf{0})$ is a basic optimal solution of $LR_{ext}^{M}$ (where $\bf{0}$ is the vector whose components are the values of dummy variables), then $x^*$ is a basic optimal solution of $LR$.
\end{lemma}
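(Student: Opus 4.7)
The plan is to prove the equivalence in two directions and then to derive the ``moreover'' part, using Cramer's rule combined with Hadamard's inequality as the central tool. The reverse implication is immediate: if $(x^*,\mathbf{0})$ is optimal for $LR_{ext}^M$, then discarding the dummy components yields a feasible point of $LR$ with the same objective, so $LR$ is feasible.

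For the forward implication, I would assume $LR$ is feasible and denote its optimal value by $v^*$. This value is finite, since any feasible point provides a finite upper bound and the non-negative objective is bounded below by $0$. Every feasible solution of $LR$ extends to $LR_{ext}^M$ by setting all dummies to zero with unchanged objective, so the optimum of $LR_{ext}^M$ is at most $v^*$ for every $M \geq 0$. The crux is then to choose $M$ large enough that no basic optimum of $LR_{ext}^M$ can use any dummy variable at a strictly positive value.

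The essential quantitative step is as follows. After negating the color constraints and adding slacks, the constraint matrix of $LR_{ext}^M$ has entries in $\{-1,0,1\}$ and its right-hand side is integral (entries $1$ for vertex constraints and $-m(k)$ for color constraints, including those of the dummy colors). Let $N = 2|V|+|K|$ denote the total number of constraints. Any basic feasible solution then arises, via Cramer's rule, as $\det(B_i)/\det(B)$ for an $N\times N$ invertible basis matrix $B$ with entries in $\{-1,0,1\}$, where $B_i$ is obtained by replacing the $i$-th column of $B$ by the integer right-hand side. By Hadamard's inequality, $|\det(B)| \leq N^{N/2}$. Thus if a dummy variable is strictly positive in such a basic solution, its value is at least $1/|\det(B)| \geq 1/N^{N/2}$, because the corresponding $\det(B_i)$ is a nonzero integer and hence at least $1$ in absolute value. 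Since all non-dummy weights are non-negative, any basic feasible solution activating at least one dummy has objective at least $M/N^{N/2}$.

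Choosing any $M > v^* \cdot N^{N/2}$ now forces every basic optimum of $LR_{ext}^M$ to have all dummies at $0$: otherwise its objective would strictly exceed $v^*$, contradicting the bound $\mathrm{opt}(LR_{ext}^M) \leq v^*$. For the moreover part, given a basic optimum $(x^*,\mathbf{0})$ of $LR_{ext}^M$, the vector $x^*$ is feasible for $LR$ with objective $\mathrm{opt}(LR_{ext}^M) \leq v^*$, hence optimal; basicity transfers because the basis columns associated to dummy color slacks are unit vectors supported on the dummy color rows, so deleting those rows and the dummy columns from the basis of $LR_{ext}^M$ preserves invertibility and yields a basis for $LR$ of the correct size. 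The main obstacle I anticipate is not the combinatorial estimate itself but the careful verification that every ingredient of the Cramer/Hadamard bound is legitimate here (the $\{-1,0,1\}$ structure after negation, the integrality of the right-hand side, and the basicity transfer in the moreover part); the essence of the argument is the elementary blend of Cramer's rule with Hadamard's determinantal bound.
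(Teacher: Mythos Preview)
Your approach is essentially the same as the paper's: both hinge on Cramer's rule combined with Hadamard's inequality to lower-bound any strictly positive basic dummy component by the reciprocal of a determinantal bound, and then pick $M$ large enough to force a contradiction with an upper bound on the optimum of $LR$. The only notable differences are bookkeeping: the paper uses the explicit, instance-computable bound $W=\sum_{k\in K} w_k\,m(k)$ in place of your $v^*$ (which makes $M$ definable without first knowing whether $LR$ is feasible or what its optimum is), and the paper is somewhat looser about the dimension used in Hadamard's bound, whereas you spell out $N=2|V|+|K|$. Your treatment of the ``moreover'' clause (basicity transfer) is actually more detailed than the paper's, which essentially leaves that part unargued.
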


\begin{proof}

Assume that bases of $LR$ have size $n\times n$ and let $H = n^{n/2}$.  By Hadamard's inequality we have that, for all $M>0$, if $B$ is a base of $LR_{ext}^{M}$ then $|det(B)|\leq H$. Besides, if $LR$ is feasible, its  objective function is bounded by $W := \sum_{k\in K} w_k m_k$.

Let $M = H (W+1)$, $B$ be an optimal base of  $LR_{ext}^{M}$ and 
$(x^*,d^*)$  the corresponding  optimal basic solution (where $d^*$ is the vector whose components are the values of dummy variables), and  $z^*_{ext}= \sum_{(S,j)\in X} w_j x^{*S}_j + M \sum_{v \in V} d^*_v$,  the optimal value of $LR_{ext}^M$.

Clearly, if $d^* = \mathbf 0$, then $x^*$ is a feasible solution of $LR$ and therefore, $LR$ is feasible.
To see the converse, assume that, for some $v \in V$, $d^*_v > 0$. 
By Cramer's rule, we have  $$d^*_v=|d^*_v| \geq \frac{1}{|det(B)|} \geq \frac{1}{H}$$ 
and then, $$z^*_{ext}\geq M d^*_v  \geq \frac{M}{H}=W+1.$$
Assume that $LR$ is feasible, i.e. there  exists a feasible solution $\tilde x$ of $LR$ with objective value $\tilde z$. Then,  $(\tilde x, \mathbf 0)$ is a feasible solution of  $LR^M_{ext}$ with objective value
$\tilde z$. Then, $\tilde z \geq z^*_{ext} \geq W+1$. 
Recalling that $W$ is an upper bound of the objective function of $LR$, it leads to a contradiction. 
Then, $LR$ is not feasible.


\end{proof}

Setting $M$ to the value proposed in the previous lemma, a basic optimal solution of $LR_{ext}$ tell us if $LR$ is feasible and, in that case, an optimal solution of $LR$ can directly be obtained. 

However, this value of $M$ could be too big, giving rise to numerical instability issues. 
Despite this, for an arbitrary $M > 0$ and considering $W$ defined as in the proof of the previous lemma, we have the following:

\begin{rem}\label{remark.M}
Let $(x^*,d^*)$ and $z^*_{ext}$ be an optimal solution and the optimal value of $LR_{ext}^M$, respectively. 
Then:
\begin{enumerate}
    \item If $d^* = \mathbf 0$, then $x^*$ is an optimal solution of $LR$.
    \item If $z^*_{ext} > W$, then $LR$ is infeasible.
\end{enumerate}
\end{rem}

Our implementation consider $M = 1000$ for the initialization procedure.
Such a value works reasonably well for the instances considered in the computational experiences.
Indeed, in the hundreds to thousands of linear relaxations solved through the B\&P, the remaining case of Remark \ref{remark.M} never happened, i.e. when the optimal solution of $LR_{ext}$ has some non-zero dummy variable and $z^*_{ext} \leq W$. 

\section{Dividing the set of solutions}\label{Sec:branchingstrategies}

In the literature on exact resolution of coloring problems, there are two main strategies to divide the set of solutions: 
branching on edges \cite{trick1996,toth2011, malaguti2012, hertz2014}
and branching on colors \cite{brelaz1979, mendezdiaz2006}.
In this section, we present how to deal with these strategies in the context of the WLCP.

\medskip

\noindent \emph{Branching on edges.}
This strategy was first proposed for GCP in \cite{zykov1949} and then used in \cite{trick1996}.
Given a graph $G$, the idea is to pick two non-adjacent vertices $u$ and $v$ and divide the solution space into those colorings where $u$ and $v$ have the same color and those where they do not.
Finding the optimal coloring in the latter case is equivalent to solve a GCP on the graph obtained by adding the edge $(u,v)$ to $G$, denoted by $G + (u,v)$.
The former case is equivalent to solve a GCP on the graph obtained from $G$ by collapsing vertices $u$ and $v$ into a single one, named $uv$.
That is, vertices $u$ and $v$ are removed and a new vertex $uv$ is added and connected to every vertex from
$N_G(u) \cup N_G(v)$.
The resulting graph will be referred to as $G_{uv}$.

\medskip{}{}{}{}

When adapting this strategy for the WLCP, the vertices $u$ and $v$ must verify the additional condition of sharing at least one color in their lists of colors, i.e.~$L(u) \cap L(v) \neq \emptyset$.
Otherwise, the subproblem on $G_{uv}$ would be infeasible, as it is not possible to assign the same color to both vertices, and, by Lemma \ref{uselessedges}, the subproblem on $G + (u,v)$ would be equivalent to the original one.

To see that this branching is robust for the WLCP, let $(G,L,w)$ be an instance and let $u$ and $v$ be two non-adjacent vertices of $G$ such that $L(u) \cap L(v) \neq \emptyset$, or equivalently, $u$ and $v$ are two non-adjacent vertices in $G_j$ for some $j \in \CC$.
It is not hard to see that, on the one hand, finding an optimal list coloring $f$ with $f(u) \neq f(v)$ is equivalent to solve the WLCP on the instance $(G+(u,v), L, w)$. 
On the other hand, finding an optimal list coloring $f$ with $f(u) = f(v)$ is equivalent to solve the WLCP on the instance $(G_{uv},L',w)$ where $L'(z) = L(z)$ for all $z \in V(G_{uv}) \setminus \{uv\}$ and $L'(uv) = L(u) \cap L(v)$.

The convergence of this branching strategy is guaranteed since an instance where, for all $j \in \mathcal{C}$, $G_j$ is a complete (possibly trivial) graph is reached in a finite number of applications.
There, the WLCP can be solved in polynomial-time by Proposition \ref{completeispoly}. In our implementation, instead of using the Hungarian algorithm, we solve the linear relaxation through the column generation process, which outputs an integer solution (since the coefficient matrix of LR$_{ext}$ is totally unimodular). In addition, the enumeration tree is explored following a depth-first search, starting with the subproblem where the vertices collapse.

\medskip

\noindent \emph{Branching on colors.}
This strategy for GCP appears in direct enumeration algorithms, such as DSATUR \cite{brelaz1979}, and later in \cite{mendezdiaz2006}.
Given a graph $G$, the idea is to choose a vertex $v$ and a color $j$, and then divide the solution space between those colorings where $v$ is colored with $j$ and those where it is not.
Additionally, this branching can easily be made non-dichotomous by  coloring $v$ with a different color in each subproblem. 

 It is not straightforward to incorporate this branching strategy into a B\&P algorithm since it is not robust for GCP, i.e. the subproblems to solve after branching do not correspond to GCP instances.
However, they can be seen as LCP instances. 
Indeed, finding the optimal coloring such that $v$ is colored with $j$ is a Precoloring Extension Problem (which is a particular case of LCP, as mentioned in Section \ref{intro}).
Similarly, finding the optimal coloring such that $v$ is not colored with $j$ can also be modeled as a LCP where $j$ is removed from the list of colors of $v$.

\medskip

When adapting this strategy for the WLCP, it is possible to take advantage of indistinguishable colors. The solution space can be  partitioned into those list colorings where $v$ is colored with some color of $\CC_j$ and those where it is not.
This only makes sense when $k(v)\geq 2$, otherwise the subproblem where $v$ is colored with some color of $\CC_j$ would be equivalent to the original problem and the other one would be infeasible.
For this reason, in every node of the branching tree, instances are always preprocessed according to Lemma \ref{preproceso} until each vertex has more than two representative colors in its list.


To see that this branching is robust for the WLCP, let $(G,L,w)$ be an instance, $j\in K$, and $v\in V_j$ with $k(v) \geq 2$. 
It is easy to prove that, on the one hand, finding an optimal list coloring such that $v$ is colored with some color of $\CC_j$ is equivalent to solve a WLCP on $(G,L_1,w)$ where $L_1(v) = \CC_j$ and $L_1(u) = L(u)$ for all $u \in V(G) \setminus \{v\}$.
On the other hand, finding an optimal list coloring such that $v$ is not colored with any color of $\CC_j$ is equivalent to solve a WLCP on $(G,L_2,w)$ where $L_2(v) = L(v) \setminus \CC_j$ and $L_2(u) = L(u)$ for all $u \in V(G) \setminus \{v\}$.

Observe that $k(v)$ is reduced by 1 in the instance $(G,L_2,w)$ and $k(v) = 1$ in the instance $(G,L_1,w)$.
Thus, the strategy eventually reaches a vertex $v$ with $k(v) = 1$, which will be prepossessed, resulting in a graph with fewer vertices.
These facts prove the convergence of this branching strategy since a trivial graph is reached in a finite number of applications.

\medskip

Again, our implementation explores the enumeration tree following a depth-first search, this time starting with the subproblem where a representative color is assigned to a vertex.

\section{Variable selection for branching} \label{Sec:branchingvariable}

In this section, different strategies are proposed to select the variables involved in each of the branching strategies presented in the previous section. For the remainder of the section, $x^*$ denotes an optimal fractional solution of LR.
In our implementation, unless another criterion is specified, we choose the most fractional variable.

\paragraph{Branching on edges}

Recall that this branching strategy requires selecting two vertices $u$ and $v$ that are not adjacent in some color graph, say $G_k$. We propose to select two vertices belonging to a stable set $S\in \mathcal S_k$ such that $x^{*k}_S$ is fractional. The following lemma proves the existence of candidates variable for this branching.

\begin{lemma} \label{lemita}
Let $x^*$ be a basic fractional optimal solution of $LR$.
Then, there exists $(S,k)\in X$ with $|S|\geq 2$ such that $x^{*k}_S$ is fractional. 
\end{lemma}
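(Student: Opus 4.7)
The plan is to argue by contradiction. Suppose every fractional component of $x^*$ has $|S|=1$; then all basic multi-set variables $x^{*k}_S$ with $|S|\geq 2$ take integer values. I freeze these multi-set variables at their current values and consider the residual LP $LR''$ in the singleton variables $x^k_{\{v\}}$ only, with vertex constraints $\sum_{k:\,v\in V_k} x^k_{\{v\}} \geq b'_v$ and color constraints $\sum_{v\in V_k} x^k_{\{v\}} \leq b'_k$, where
$$b'_v \;=\; 1 \,-\, \sum_{\substack{(S,k)\in X\\ |S|\geq 2,\, v\in S}} x^{*k}_S, \qquad b'_k \;=\; m(k) \,-\, \sum_{\substack{S\in \ST_k\\ |S|\geq 2}} x^{*k}_S$$
are integers.

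The key observation is that the constraint matrix of $LR''$ is totally unimodular: the column of $x^k_{\{v\}}$ has a single $1$ in the vertex-$v$ row and a single $1$ in the color-$k$ row, so after multiplying the color rows by $-1$ we obtain a matrix with exactly one $+1$ and one $-1$ per column, i.e.~the oriented vertex-arc incidence matrix of a bipartite digraph. Combined with integrality of the right-hand side, every basic feasible solution of $LR''$ is integer; moreover the feasible region is a polytope since $0 \leq x^k_{\{v\}} \leq b'_k$ for every singleton variable.

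By construction, the restriction of $x^*$ to the singleton variables is feasible for $LR''$, hence it can be written as a convex combination $x^*|_{\mathrm{sing}} = \sum_i \lambda_i\, y^i$ of integer vertices $y^i$ of $LR''$, with $\lambda_i > 0$ and $\sum_i \lambda_i = 1$. Re-attaching the frozen (integer) multi-set values to each $y^i$ yields integer feasible solutions $z^i$ of $LR$ such that $x^* = \sum_i \lambda_i\, z^i$. Since $x^*$ is a vertex of the feasible region of $LR$, this forces $z^i = x^*$ for every $i$, hence $x^*|_{\mathrm{sing}} = y^i$ would be integer, contradicting the assumption that some singleton component of $x^*$ is fractional. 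The main point to verify carefully is the total-unimodularity claim (that the residual singleton submatrix, after sign-flipping the color rows, really becomes a bipartite $\pm 1$ incidence matrix) and the integrality of the residual right-hand sides $b'_v, b'_k$ after the freezing step.
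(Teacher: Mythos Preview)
Your proof is correct and follows essentially the same approach as the paper: freeze the (assumed integer) multi-vertex variables, observe that the residual singleton LP has a totally unimodular constraint matrix with integer right-hand side, and derive a contradiction with the fractionality of $x^*$. The only difference is in the final step: the paper directly asserts that the restriction of $x^*$ to the singleton variables is a \emph{basic} solution of the residual LP $LR^\dagger$ (hence integer), whereas you instead write this restriction as a convex combination of integer vertices of $LR''$, lift back, and invoke extremality of $x^*$ in $LR$---a slightly more explicit route to the same conclusion.
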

\begin{proof}
Let $X' = \{(S,k) \in X : |S|\geq 2 \}$ and suppose that $x^{*k}_S\in \{0,1\}$ for all $(S,k)\in X'$.
Let $LR^\dagger$ be the linear program obtained by replacing $x^{k}_S$ with the value $x^{*k}_S$, for all $(S,k)\in X'$.  Clearly, 
$x^*$ is also an optimal solution of $LR^\dagger$.
Moreover, all variables $x_S^k$ of $LR^\dagger$ satisfy $|S|=1$.
Then, the columns of its coefficient matrix has only two non-zero entries: one entry equal to 1 in the row corresponding to the unique vertex in
$S$ and one entry equal to $-1$ in the row corresponding to color $k$.
Therefore, the coefficient matrix of $LR^\dagger$ is totally unimodular and, as the r.h.s.~of constraints are integer, the basic optimal
solutions of $LR^\dagger$ are integer too, in particular $x^*$, which is a contradiction.
\end{proof}

Clearly, by selecting arbitrary vertices $u$ and $v$ from some stable set $S$ according to the previous lemma, $S$ can no longer be part of a feasible list coloring of $G + (u,v)$.
But this might not be true for the subproblem where $u$ and $v$ receive the same color.
For example, if $|\hat S \cap \{u,v\}| \in \{0,2\}$ for all $(\hat S, \ell) \in X$ such that $x^{*\ell}_{\hat S} > 0$, then an equivalent feasible solution can be obtained for $G_{uv}$ by replacing $u$ and $v$ by $uv$ in those stable sets that contains both $u$ and $v$.
Thus, two more sophisticated strategies are proposed.

\begin{itemize}
    
\item[$\blacktriangleright$] \textsc{Edg-Std}. This strategy extends the standard one proposed by \cite{trick1996} for GCP (in fact, it behaves exactly the same for instances coming from GCP):

\begin{enumerate}[i.]
\item Find $(S,k) \in X$ such that $|S| \geq 2$ and $0 < x_{S}^{*k} < 1$. \label{EDGE-STD-1}
\item Choose $u \in S$. \label{EDGE-STD-2}
\item Find $(\hat S, \ell)\in X\setminus \{(S,k)\}$ such that $u\in \hat S$ and $x_{\hat S}^{*\ell} > 0$. \label{EDGE-STD-3}
\item If $\hat S \neq S$, choose $v \in S \triangle \hat S$. Otherwise, choose $v \in S\setminus\{u\}$. \label{EDGE-STD-4}
\end{enumerate}

\end{itemize}

The existence of such $(\hat S, \ell)$ is guaranteed since $0 < x^{*k}_{S} < 1$ and the vertex constraint must hold for $u$.

When $k = \ell$ (as it happens on instances coming from GCP), the vertex $v$ can always be selected from $S \triangle \hat S$ since the stable sets in $\ST_k$ are all different.
Even more, if the stable sets are maximal, then $v$ must be adjacent to some vertex of the other stable set.
W.l.o.g. assume that $v \in \hat S \setminus S$ and $v$ is adjacent to $w \in S$. Then, in $G_{uv}$, $uv$ and $w$ are adjacent and $S$ can no longer be part of a feasible list coloring of $G_{uv}$.

However, for general instances of WLCP, $k \neq \ell$ might occur.
In this case, if $v$ is picked from $S \triangle \hat S$ and the stable sets are maximal, then it is also possible to prove that $x^*$ is cut (by observing that $v$ must be adjacent to some vertex of the other stable set if $v$ belongs to both $V_k$ and $V_\ell$, whereas $V_k$ or $V_\ell$ does not contain the vertex $uv$ otherwise).

Unfortunately, $x^*$ might not be cut when selecting $v$ from $S \setminus \{u\}$.
Although a different stable set $\hat S$ can be considered in that case, being $S = \hat S$ is very infrequent and obtaining the same fractional solution $x^*$ after branching can only happen a finite number of times (as $G_{uv}$ has one less vertex).

Observe that the branching on edges has a greater structural impact when $u$ and $v$ belongs to many vertex-color sets.
Recall that, in $G + (u,v)$, the vertices $u$ and $v$ become adjacent in every color graph that contains both vertices; whereas in $G_{uv}$, the vertices $u$ and $v$ are removed from every color graph that contains only one of them.

\medskip

An alternative strategy is proposed prioritizing vertices that cause the greatest impact when branching.

\begin{itemize}
\item[$\blacktriangleright$] \textsc{Edg-alt}. Find the vertices $u$ and $v$ that maximize $k(u) + k(v)$ subject to there exists $(S,k) \in X$ with $\{u,v\} \subset S$ and $0 < x^{*k}_{S} < 1$.
\end{itemize}

\paragraph{Branching on colors}
In this branching strategy, a vertex $v$ and a representative color $k$ such that $v\in V_k$ must be selected.
Recall that according to Proposition \ref{preproceso}, $k(v) \geq 2$  for every vertex $v$.

Three strategies are proposed to select $v$ and $k$. The first strategy follows part of the idea of \cite{trick1996}:

\begin{itemize}
\item[$\blacktriangleright$] \textsc{Clr-Std}. 
\begin{enumerate}[i.]
    \item Find $(S,k) \in X$ such that $0 < x^{*k}_{S} < 1$.
    \item Choose $v \in S$.
\end{enumerate}
\end{itemize}

 Observe that the branching on colors has a greater structural impact when $v$ has multiple neighbours in $G_k$. 
Recall that the Preposition \ref{preproceso} can be applied to the subproblem where $v$ must receive some color of $\mathcal{C}_k$, and thus the vertices in $N_{G_k}(v)$ are removed from $G_k$.
Besides, the vertices with a lower value of $k(v)$ are usually the most critical to color, as the few candidate colors can quickly become unavailable if their neighbors are colored first.
At the same time, they give much more information to the subproblem where $v$ cannot receive a color of $\mathcal{C}_k$.
Under this consideration, two alternative strategies are presented: 

\begin{itemize}
\item[$\blacktriangleright$] \textsc{Clr-Alt$_1$}. Find the vertex $v$ and $k\in K$ that maximize $|N_{G_k}(v)|$ subject to there exists $(S,k) \in X$ with $v \in S$ and $0 < x^{*k}_{S} < 1$. 
In case of a tie, choose the vertex $v$ with the lowest $k(v)$ and then, the color $k$ with the lowest $m(k)$.
\end{itemize}

\begin{itemize}
\item[$\blacktriangleright$] \textsc{Clr-Alt$_2$}:
\begin{enumerate}[i.]
    \item Find a vertex $v$ that minimizes $k(v)$ and such that exists $(S,k) \in X$ with $v \in S$ and $0 < x^{*k}_{S} < 1$.
    \item Find the representative color $k$ that maximizes $|N_{G_k}(v)|$ and such that $0 < x^{*k}_{S} < 1$ for some $S \in \ST_k$ with $v \in S$.
    In case of a tie, choose the color $k$ with the lowest $m(k)$.
\end{enumerate}
\end{itemize}




Observe that, whenever $v$ belongs to some stable set $S$ with fractional $x^{*k}_{S}$, the fractional solution $x^*$ is cut from the feasible region of the subproblem where $v$ cannot receive a color of $\mathcal{C}_k$.
But this might not be enough for cutting $x^*$ in the feasible region of the other subproblem.

\section{Sets of instances}\label{Sec:Instances}

In this section, the set of instances used for the computational experiments are presented.

There are two types of instances widely used for testing graph coloring algorithms:
random and benchmark.
In the case of GCP, two parameters are involved in the generation of random instances: $n$ (number of vertices) and $p$ (edge probability, i.e. probability for two distinct vertices to be adjacent), and the best-known benchmark instances come from the DIMACS Challenge \cite{DIMACS}.
In our case, two difficulties arise: parameters $n$ and $p$ are insufficient, and no benchmark instances are available for WLCP.
We propose the following:

\subsection*{Random instances for WLCP}

The parameters involved in the generation of random instances for WCLP are the following:

\begin{itemize}
    \item $n \in \NN$: number of vertices,
    \item $p \in (0, 1)$: edge probability,
    \item $|K| \in \NN$: number of distinguishable colors,
    \item $m \in \NN^K$: vector of multiplicities,
    \item $w \in \NN^K$: vector of weights,
    \item $q \in (0, 1)$: vertex-color set probability, i.e. probability for a vertex of $G$ to belong to $V_k$ for each $k \in K$.
\end{itemize}


The generation procedure is as follows. First, a random graph is generated with parameters $n$ and $p$.
For each $k \in \{1, \ldots, |K|\}$, a color $k$ is considered with multiplicity $m(k)=m_k$ and weight $w_k$. Each initially empty set $V_k$ is filled with  vertex $v \in V$ with probability $q$.

Observe that $n$ and $q$ indirectly determine the size of the vertex-color sets and the overlap between them.
In fact, $|V_k| \simeq qn$ for each $k \in K$ and $k(v) \simeq q|K|$ for each $v \in V$. 
Besides, the color graphs are expected to have in average the same density than $G$ due to the randomness of the generation procedure. 


\begin{itemize}
    \item[$\blacktriangleright$] \emph{\textbf{Set 1}} (135 instances) consists of 5 instances for each combination of $p, q \in \{0.25, 0.5, 0.75\}$, generated from random seeds. The other parameters values are $n = |K| = 60$, $w = m = \mathbf{1}$.  
    Besides, 2 copies of each instance are made by changing $m = k \mathbf{1}$ with $k \in \{3,5\}$ (to preserve the structure of the instance).
\end{itemize}

The vector of weights is always $\mathbf{1}$ since weights of colors distort the structure of the instance, hiding the incidence of the rest of the parameters.
The study of different weights is left for future work.

\subsection*{Random instances for $\mu$-coloring problem}

A set of random instances for the $\mu$-coloring problem is considered to analyse the performance of the proposed algorithms over instances with different structures.
The following parameters are involved in the generation procedure:
\begin{itemize}
    \item $n \in \NN$: number of vertices,
    \item $p \in (0, 1)$: edge probability,
    \item $|K| \in \NN$: number of distinguishable colors,
    \item $m \in \NN^K$: vector of multiplicities,
    \item $t \in (0, 1)$: probability of removing a vertex between one color graph and the next one.
\end{itemize}

The generation procedure is similar to the previous one, unless each vertex-color set is generated based on the previous one.
This means that $V_1 = V$ and, for each $k \in \{2, \ldots, |K|\}$, $V_{k}$ has all the vertices from $V_{k-1}$ with probability $1-t$ (colors with empty color-vertex sets are discarded).

\begin{itemize}
    \item[$\blacktriangleright$] \emph{\textbf{Set 2}} (54 instances) consists of 3 instances for each combination of
    $n \in \{ 70, 80 \}$, $p \in \{ 0.25, 0.5, 0.75 \}$ and $t \in \{ 0.05, 0.1, 0.2 \}$. The other parameter values are  $|K| = \nicefrac{n}{2}$, $m(1) = \omega(G)$ (size of a maximum clique in $G$) and, for each $k \in \{2,\ldots,|K|\}$, $m(k)$ is a random integer in $[1,\ldots,5]$.  
    The choice of the multiplicity for the first color intends to mitigate the possibility of generating infeasible instances.

    As instances of WLCP, they have $w = \mathbf{1}$, $V_{|K|} \subseteq \ldots \subseteq V_1$, and thus $k(v)$ is variable: for each $k \in \{1,\ldots,|K|-1\}$, a vertex $v \in V_k \setminus V_{k+1}$ has $k(v) = k$, and a vertex $v \in V_{|K|}$ has $k(v) = |K|$.
    Indeed, the average $|V_k|$ and the average $k(v)$ decreases when $t$ grows.
    
\end{itemize}

\subsection*{Random instances for GCP}

A set of random instances for GCP is also considered to analyze the performance of the proposed algorithms.
\begin{itemize}
    \item[$\blacktriangleright$] \emph{\textbf{Set 3}} (50 instances) consists of 5 instances for each combination of:
    $n \in \{ 70, 80 \}$, $p \in \{ 0.1, 0.3, 0.5, 0.7, 0.9 \}$.
    
    As instances of WLCP, they have 
    $w = \bar{\mathbf{1}}$, $K = \{1\}$, $m(1) = \Delta(G)+1$, and $|V_1| = V$.
\end{itemize}

\subsection*{Benchmark instances for GCP}

Besides the random instances for GCP, a set of instances from the literature is considered.
\begin{itemize}
    \item[$\blacktriangleright$] \emph{\textbf{Set 4}} (34 instances) contains some of the instances from the DIMACS Challenge for the GCP \cite{DIMACS}.
    
    Specifically, we consider the most challenging instances from \citet{toth2011}, i.e. those where the lower bounds given by the clique number of the graph and the linear relaxation of the IP model differ from the upper bound given by the heuristic algorithm MMT.
    Besides, we excluded instances whose linear relaxation cannot be solved within the time limit.
\end{itemize}

\subsection*{Particular instances of WLCP}

Finally, particular instances of WLCP with different structures are considered.

\begin{itemize}
    \item[$\blacktriangleright$] \emph{\textbf{Set 5}} (16 instances) contains 3 instances of Precoloring Extension from \citet{quasigroup}, 10 instances of MWSCP from \citet{SETCOVER}, and 3 instances of WLCP from \citet{ejs2018}. These last instances are the only ones that consider non-unit weights and they correspond to real-world costs. Table \ref{Tab:set5} shows the list of instances considered in this set.
    The columns report the name, the problem, the number of vertices $n$ and edges $m$, the density $p$, the number of representative colors $|K|$, and the average $k(v)$ of the vertices of each instance.
\end{itemize}

\begin{table}[h]
    \centering
    \footnotesize
    \setlength{\tabcolsep}{6pt}
    \begin{tabular}{lrrrrrr}
    \toprule
        Name & Problem & $n$ & $m$ & $p$ & $|K|$ & $\bar k$ \\ \midrule
        \texttt{order18holes120} & Precol. Ext. & 324 & 5508 & 0.11 & 18 & 7 \\
        \texttt{order30holes316} & Precol. Ext. & 900 & 26100 & 0.06 & 30 &  11 \\
        \texttt{order30holes320} & Precol. Ext. & 900 & 26100 & 0.06 & 30 & 11 \\
        \texttt{scp41} & MWSCP & 200 & 0 & 0 & 1000 & 20 \\
        \texttt{scpcyc06} & MWSCP & 240 & 0 & 0 & 192 & 4 \\
        \texttt{scpa1} & MWSCP & 300 & 0 & 0 & 3000 & 60 \\
        \texttt{scpb1} & MWSCP & 300 & 0 & 0 & 3000 & 150 \\
        \texttt{scpc1} & MWSCP & 400 & 0 & 0 & 4000 & 80 \\
        \texttt{scpclr1} & MWSCP & 511 & 0 & 0 & 210 & 26 \\
        \texttt{scpcyc07} & MWSCP & 672 & 0 & 0 & 448 & 4 \\
        \texttt{rail507} & MWSCP & 507 & 0 & 0 & 62182 & 797 \\ 
        \texttt{rail516} & MWSCP & 516 & 0 & 0 & 47311 & 610 \\ 
        \texttt{rail582} & MWSCP & 582 & 0 & 0 & 54341 & 676 \\
        \texttt{trips215drivers60} & WLCP & 215 & 9442 & 0.41 & 3389 & 142 \\
        \texttt{trips587drivers136} & WLCP & 587 & 65106 & 0.38 & 19959 & 434 \\
        \texttt{trips818drivers191} & WLCP & 818 & 119130 & 0.36 & 36761 & 702 \\ \bottomrule
    \end{tabular}
    \caption{Instances of Set 5.}
    \label{Tab:set5}
\end{table}

\section{Computational experiences}\label{Sec:computationalexperiences}

From now on, our B\&P algorithm is called BP-LCol.
The objective of this section is to evaluate the performance of BP-LCol on Sets 1 to 5 and to compare the proposed branching rules and variable selection strategies.

The setup includes a computer equipped with an Intel i7 3.6GHz (a single thread is used), the operating system Ubuntu 18.04, GCC 7.5.0, and the solver IBM ILOG CPLEX 12.7 \cite{CPLEX}.
For further comparisons, benchmark program \emph{dfmax} reports an average of
2.83 seconds (user time) over 10 runs on the instance r500.5.b (freely available at \cite{DIMACS}).

An instance is considered \emph{solved} if the solver finds an optimal solution and proves it to be optimal within 2 hours (7200 seconds) of CPU time.

\subsection{Random instances of WLCP}

The results obtained for the random instances of the WLCP (Set 1) are presented in Table \ref{Tab:set1}.
The first 3 columns report the values of parameters $p$, $q$, and $m$, respectively.
Each row summarizes the results obtained for the 5 instances with those parameters, except for the last row which considers all the instances.
Column ``$\bar k$'' reports the average $k(v)$ of the vertices of the graph.
Column ``solved (\%)'' is the percentage of solved instances for each configuration.
Column ``time (s)'' is the average execution time (in seconds) of the solved instances for each configuration, and a ``$-$'' is displayed in case no instances are solved.
To make the table easier to read, columns corresponding to the branching on edges are colored.
In each row, the best result among the configurations is highlighted in bold (a configuration is considered better than another if it solves a greater number of instances, and in case of a tie, if the average elapsed time is lower).

\begin{table}
    \centering
    \footnotesize
    \setlength{\tabcolsep}{3.25pt}
    \begin{tabularx}{\textwidth}{>{\centering\hsize=0.6cm}X>{\centering\hsize=0.6cm}X>{\centering\hsize=0.3cm}X>{\centering\hsize=0.6cm}X>{\columncolor{black!10}}r>{\columncolor{black!10}}rrrr>{\columncolor{black!10}}r>{\columncolor{black!10}}rrrr}
    \toprule
        $p$ & $q$ & $m$ & $\bar k$ & \multicolumn{5}{c}{solved (\%)} & \multicolumn{5}{c}{time (s)} \\
        \cmidrule(lr){5-9} \cmidrule(lr){10-14} 
        ~ & ~ & ~ & ~ & \multicolumn{2}{c}{\textsc{Edg}} & \multicolumn{3}{c}{\textsc{Clr}} & \multicolumn{2}{c}{\textsc{Edg}} & \multicolumn{3}{c}{\textsc{Clr}} \\
        \cmidrule(lr){5-6} \cmidrule(lr){7-9} \cmidrule(lr){10-11} \cmidrule(lr){12-14}
        & & & & \multicolumn{1}{c}{\textsc{Std}} & \multicolumn{1}{c}{\textsc{Alt}} & \textsc{Std} & \textsc{Alt$_1$} & \textsc{Alt$_2$} & \multicolumn{1}{c}{\textsc{Std}} & \multicolumn{1}{c}{\textsc{Alt}} & \textsc{Std} & \textsc{Alt$_1$} & \textsc{Alt$_2$} \\ \midrule
        \multirow{9}{*}{0.25} & \multirow{3}{*}{0.25} & 1 & 15 & 100 & 100 & 100 & 100 & \textbf{100} & 511.2 & 644.8 & 146.6 & 98.1 & \textbf{25.8} \\
        & & 3 & 15 & 100 & 100 & 100 & 100 & \textbf{100} & 331.1 & 657.3 & 145.6 & 104.7 & \textbf{22.4} \\
        & & 5 & 15 & 100 & 100 & 100 & 100 & \textbf{100} & 321.7 & 655.8 & 145.2 & 104.7 & \textbf{22.4} \\
        \cmidrule{2-14}
        & \multirow{3}{*}{0.5} & 1 & 30 & 100 & 100 & 100 & 100 & \textbf{100} & 171.3 & 129.1 & 1853.1 & 294.2 & \textbf{115.0} \\
        & & 3 & 30 & \textbf{100} & 100 & 100 & 100 & 100 & \textbf{91.8} & 130.3 & 2377.1 & 285.0 & 148.8 \\
        & & 5 & 30 & \textbf{100} & 100 & 100 & 100 & 100 & \textbf{91.7} & 130.5 & 2390.1 & 284.7 & 148.7 \\
        \cmidrule{2-14}
        & \multirow{3}{*}{0.75} & 1 & 45 & 100 & \textbf{100} & 80 & 60 & 60 & 155.6 & \textbf{71.0} & 3819.3 & 1668.6 & 1875.9 \\
        & & 3 & 45 & \textbf{100} & 100 & 80 & 60 & 60 & \textbf{41.4} & 71.3 & 3461.0 & 509.2 & 1213.4 \\
        & & 5 & 45 & \textbf{100} & 100 & 80 & 60 & 60 & \textbf{41.4} & 71.3 & 3900.3 & 508.2 & 1213.0 \\
        \midrule
        \multirow{9}{*}{0.5} & \multirow{3}{*}{0.25} & 1 & 15 & 100 & 100 & 100 & \textbf{100} & 100 & 20.9 & 44.9 & 36.1 & \textbf{15.3} & 19.7 \\
        & & 3 & 15 & 100 & 100 & 100 & \textbf{100} & 100 & 35.7 & 18.7 & 25.8 & \textbf{12.1} & 19.7 \\
        & & 5 & 15 & 100 & 100 & 100 & \textbf{100} & 100 & 35.8 & 18.8 & 25.7 & \textbf{12.2} & 19.8 \\
        \cmidrule{2-14}
        & \multirow{3}{*}{0.5} & 1 & 30 & 100 & \textbf{100} & 100 & 100 & 100 & 115.0 & \textbf{84.5} & 1322.1 & 875.7 & 939.9 \\
        & & 3 & 30 & \textbf{100} & 100 & 100 & 100 & 100 & \textbf{39.9} & 44.8 & 357.9 & 664.9 & 1176.8 \\
        & & 5 & 30 & \textbf{100} & 100 & 100 & 100 & 100 & \textbf{39.9} & 44.7 & 358.8 & 665.0 & 1176.0 \\
        \cmidrule{2-14}
        & \multirow{3}{*}{0.75} & 1 & 45 & 100 & \textbf{100} & 0 & 40 & 0 & 96.4 & \textbf{47.7} & -- & 580.7 & -- \\
        & & 3 & 45 & \textbf{100} & 100 & 40 & 20 & 0 & \textbf{44.6} & 52.5 & 2468.1 & 406.2 & -- \\
        & & 5 & 45 & \textbf{100} & 100 & 40 & 20 & 0 & \textbf{44.5} & 52.3 & 2417.8 & 410.7 & -- \\
        \midrule
        \multirow{9}{*}{0.75} & \multirow{3}{*}{0.25} & 1 & 15 & \textbf{100} & 100 & 100 & 100 & 100 & \textbf{1.4} & 3.0 & 5.7 & 6.2 & 2.5 \\
        & & 3 & 15 & 100 & \textbf{100} & 100 & 100 & 100 & 1.7 & \textbf{1.0} & 3.9 & 2.1 & 1.5 \\
        & & 5 & 15 & 100 & \textbf{100} & 100 & 100 & 100 & 1.7 & \textbf{1.0} & 4.0 & 2.1 & 1.5 \\
        \cmidrule{2-14}
        & \multirow{3}{*}{0.5} & 1 & 30 & \textbf{100} & 100 & 100 & 80 & 80 & \textbf{4.1} & 5.3 & 535.6 & 1470.7 & 1753.4 \\
        & & 3 & 30 & \textbf{100} & 100 & 100 & 60 & 100 & \textbf{2.5} & 3.1 & 1107.6 & 532.2 & 590.0 \\
        & & 5 & 30 & \textbf{100} & 100 & 100 & 60 & 100 & \textbf{2.5} & 3.1 & 1144.4 & 533.9 & 870.8 \\
        \cmidrule{2-14}
        & \multirow{3}{*}{0.75} & 1 & 45 & 100 & \textbf{100} & 40 & 40 & 20 & 5.7 & \textbf{5.6} & 39.3 & 10.4 & 0.6 \\
        & & 3 & 45 & 100 & \textbf{100} & 40 & 20 & 20 & 4.8 & \textbf{4.1} & 328.1 & 0.4 & 1.1 \\
        & & 5 & 45 & 100 & \textbf{100} & 60 & 20 & 20 & 5.9 & \textbf{4.1} & 140.2 & 0.4 & 1.1 \\
        \midrule
        \multicolumn{4}{c}{All} & \textbf{100} & 100 & 84 & 76 & 75 & \textbf{83.7} & 111.1 & 1022.8 & 355.6 & 459.7\\
        \bottomrule
    \end{tabularx}
    \caption{Comparison of branching strategies and variable selection rules on instances from Set 1.}
    \label{Tab:set1}
\end{table}

With respect to the branching on edges, all the instances are solved, and the average execution time of \textsc{Alt} is 33\% higher than \textsc{Std}, 83 vs. 111 seconds respectively.
The results suggest that, for a given value of $q$, the average execution time decreases as the graph is denser.
Besides, for given values of $p$ and $q$, the average execution time generally stays the same or decreases as $\bar k$ increases, except for \textsc{Std} on instances with $p = 0.5$ and $q = 0.25$.
Furthermore, the performance is almost the same for $m=3$ and $m=5$.

Regarding the branching on colors, $\textsc{Std}$ solves the highest number of instances, 86\% of the total, but reports the worst average execution time.
In contrast, $\textsc{Alt}_1$ and $\textsc{Alt}_2$ solve fewer instances, 76\% and 75\% respectively, but the average execution times are considerable better, 355 and 459 vs. 1022 seconds respectively.
The results suggest that, for a given value of $p$, the number of solved instances and the average execution time worsen as $\bar k$ increases.
Besides, for equal values of $p$ and $q$, the effects of multiplicity seem to be more notorious than with the branching on edges; but also less predictable, since the performance can get much better or worse as multiplicity increases.

When comparing both strategies, for low $\bar k$ (i.e. $\bar k = 15$), \textsc{Clr} performs far better than \textsc{Edg} on low-density graphs, $\textsc{Clr-Alt}_1$ and $\textsc{Clr-Alt}_2$ obtained slightly better execution times than \textsc{Edg-Std} on medium-density graphs, and all the strategies achieved very low execution times on high-density graphs.
For medium and high values of $\bar k$ (i.e. $\bar k \in \{30,45\}$), the winning strategies are by far the \textsc{Edg}.

These behaviors can be explained by the complete enumeration tree that each type of branch generates, depending on the structure of the instances.
In the branching on edges, given a number of vertices and the color-vertex sets, the tree tends to reach a greater height when the color graphs have lower density, since adding an edge requires more steps to reach a leaf (where each color graph is a complete graph).
In the branching on colors, given a graph and a number of colors, the height of the tree grows with the value of $\bar k$, since forbidding a representative color to a vertex requires more steps to reach a leaf (where each vertex belongs to a single color graph).

Based on these results, a general B\&P algorithm for the WLCP should apply $\textsc{Clr-Alt}_2$ on instances with low-density graphs and low $\bar k$ and \textsc{Edg-Std} otherwise.
In the next computational experiences, we analyze if similar behaviors are observed when these two strategies are evaluated in the other sets of instances.

\subsection{Random instances of $\mu$-coloring}

Table \ref{Tab:set2} shows the results corresponding to the random instances for the $\mu$-coloring problem (Set 2).

\begin{table}
    \centering
    \footnotesize
    \begin{tabularx}{0.86\textwidth}{rrrr>{\columncolor{black!10}}rr>{\columncolor{black!10}}rr}
    \toprule
        $n$ & $p$ & $t$ & $\bar k$ &
        \multicolumn{2}{c}{solved (\%)} ~ & \multicolumn{2}{c}{time (s)} \\ \cmidrule(lr){5-6} \cmidrule(lr){7-8} 
        & & & &  \multicolumn{1}{c}{\textsc{Edg-Std}} & $\textsc{Clr-Alt}_2$ & \multicolumn{1}{c}{\textsc{Edg-Std}} & $\textsc{Clr-Alt}_2$ \\ \midrule
        \multirow{9}{*}{70} & \multirow{3}{*}{0.25} & 0.05 & 13 & \textbf{100} & 0 & 2030.4 & -- \\
        & & 0.1 & 8 & 100 & \textbf{100} & 84.1 & \textbf{22.6} \\
        & & 0.2 & 4 & 100 & \textbf{100} & 307.0 & \textbf{11.6} \\ \cmidrule{2-8}
        & \multirow{3}{*}{0.5} & 0.05 & 13 & \textbf{100} & 33 & 1077.8 & 2734.3 \\
        & & 0.1 & 8 & \textbf{100} & 100 & \textbf{429.3} & 570.5 \\
        & & 0.2 & 5 & 100 & \textbf{100} & 366.5 & \textbf{129.6} \\ \cmidrule{2-8}
        & \multirow{3}{*}{0.75} & 0.05 & 15 & \textbf{100} & 0 & 15.7 & -- \\ 
        & & 0.1 & 9 & \textbf{100} & 100 & \textbf{24.3} & 46.7 \\ 
        & & 0.2 & 5 & 100 & \textbf{100} & 22.8 & \textbf{18.7} \\ \midrule
        \multirow{9}{*}{80} & \multirow{3}{*}{0.25} & 0.05 & 15 & 0 & 0 & -- & -- \\
        & & 0.1 & 8 & 0 & 0 & -- & -- \\
        & & 0.2 & 4 & 0 & \textbf{100} & -- & 780.7 \\ \cmidrule{2-8}
        & \multirow{3}{*}{0.5} & 0.05 & 14 & 0 & 0 & -- & -- \\
        & & 0.1 & 10 & 33 & \textbf{33} & 4239.0 & \textbf{1610.2} \\
        & & 0.2 & 4 & 67 & \textbf{100} & 3075.3 & 335.0 \\ \cmidrule{2-8}
        & \multirow{3}{*}{0.75} & 0.05 & 14 & \textbf{100} & -- & 177.1 & -- \\
        & & 0.1 & 7 & \textbf{100} & 100 & \textbf{30.2} & 678.1 \\
        & & 0.2 & 4 & \textbf{100} & 100 & \textbf{43.5} & 47.7 \\ \midrule
        \multicolumn{4}{c}{All} & \textbf{72} & 59 & 621 & 383 \\ \bottomrule
    \end{tabularx}
    \caption{Comparison of branching strategies and variable selection rules on instances from Set 2.}
    \label{Tab:set2}
\end{table}

The results reaffirm the analysis done in the previous experience, i.e. the performance of \textsc{Clr} is better for instances with low $\bar k$ (which correspond to higher $t$) and \textsc{Edg} for instances with high-density graphs.

By looking at the overall results, \textsc{Edg-Std} solves a larger number of instances than $\textsc{Clr-Alt}_2$, 72\% vs. 59\% respectively.
Again, when the results are desegregated by the structure of the instances, it is possible to see that $\textsc{Clr-Alt}_2$ is convenient for some cases.

For low values of $\bar k$ (with $t=0.2$), $\textsc{Clr-Alt}_2$ is much more effective than \textsc{Edg-Std} on low and medium-density graphs, while the results are fairly similar for high-density graphs. 
On high-density graphs, \textsc{Edg-Std} performs much better than $\textsc{Clr-Alt}_2$ for medium and high values of $\bar k$ (with $t \in \{0.05,0.1\}$).
Besides, on low and medium-density graphs with $n=70$, \textsc{Edg-Std} is superior than $\textsc{Clr-Alt}_2$ for the greatest values of $\bar k$ (with $t = 0.05$).

\subsection{Random instances of GCP}

The results for the random instances of GCP (Set 3) are shown in Table \ref{Tab:set3}. 

\begin{table}[h]
    \centering
    \footnotesize
    \begin{tabularx}{0.67\textwidth}{rr>{\columncolor{black!10}}rr>{\columncolor{black!10}}rr}
        \toprule
        $n$ & $p$ & \multicolumn{2}{c}{solved (\%)} & \multicolumn{2}{c}{time (s)} \\ \cmidrule(lr){3-4} \cmidrule(lr){5-6}
        & & \multicolumn{1}{c}{\textsc{Edg-Std}} & $\textsc{Clr-Alt}_2$ & \multicolumn{1}{c}{\textsc{Edg-Std}} & $\textsc{Clr-Alt}_2$ \\ \midrule
        \multirow{5}{*}{70} & 0.1 & 100 & \textbf{100} & 4.4 & \textbf{0.7} \\
        & 0.3 & 100 & \textbf{100} & 1792.7 & \textbf{81.0} \\
        & 0.5 & 100 & \textbf{100} & 459.8 & \textbf{10.3} \\
        & 0.7 & 100 & \textbf{100} & 21.3 & \textbf{1.8} \\
        & 0.9 & 100 & \textbf{100} & 0.05 & \textbf{0.04} \\ \midrule
        \multirow{5}{*}{80} & 0.1 & 100 & \textbf{100} & 92.6 & \textbf{11.1} \\
        & 0.3 & 0 & \textbf{100} & -- & 687.3 \\
        & 0.5 & 80 & \textbf{100} & 3813.1 & 93.5 \\
        & 0.7 & 100 & \textbf{100} & 219.2 & \textbf{13.8} \\
        & 0.9 & \textbf{100} & 100 & \textbf{0.07} & 0.08 \\ \midrule
        \multicolumn{2}{c}{All} & 88 & \textbf{100} & 641 & 90 \\ \bottomrule
    \end{tabularx}
    \caption{Comparison of branching strategies and variable selection rules on instances from Set 3.}
    \label{Tab:set3}
\end{table}

Recall that instances of GCP initially have $k(v) = 1$ for all vertex $v$, and therefore $\textsc{Clr-Alt}_2$ is expected to perform much better than \textsc{Edg-Std} on low and medium-density graphs.
Once again, the results support this claim. $\textsc{Clr-Alt}_2$ vastly outperforms \textsc{Edg-Std} on graphs with density $p \in \{0.1,0.3,0.5,0.7\}$.
For very high-density graphs (with $p = 0.9$), both strategies perform equally well.

Also, the performance of \textsc{Edg-Std} improves with the increment of $p$, except when we compare $p=0.1$ with $p=0.3$.
Despite the complete enumeration tree is larger for very low-density graphs (in this case, $p=0.1$), it can be observed that BP-LCol quickly finds high-quality feasible solutions to prune, resulting in a very low number of processed nodes.


\subsection{Benchmark instances of GCP}

We compare the results obtained by  BP-LCol with the best B\&P algorithm for GCP from the literature, namely the MMT-BP algorithm by \citet{toth2011}, for the benchmark instances of GCP (Set 4).
MMT-BP combines a B\&P scheme and an effective metaheuristic, called MMT, to obtain an initial feasible solution and initialize the column generation process. 
Also, a Tabu Search algorithm is applied to speed up the pricing routine, and when an entering column cannot be found, an IP model for the MWSSP is directly tackled with CPLEX 10.1.
Two branching strategies are considered: variable branching (\textsc{Var}) and edge branching (\textsc{Edg}).
The first branching strategy is the standard one on the variables of the IP model, which is not robust, while the second is the one originally used by Mehrotra and Trick in \cite{trick1996}.
The variable having the largest fractional part is always selected to branch.

As MMT generates a high-quality initial coloring, we provide BC-LCol with the same initial solution in order to make a fair comparison.
Unfortunately, the adaptation of the solution provided by MMT to our initialization routine is not straightforward and is left as future work.

\begin{table}
    \centering
    \footnotesize
    \setlength{\tabcolsep}{6pt}
    \begin{tabularx}{\textwidth}{lrr>{\columncolor{black!10}}r>{\columncolor{black!10}}rrr>{\columncolor{black!10}}r>{\columncolor{black!10}}rrr}
    \toprule
    \multirow{1}{*}{Name} & \multirow{1}{*}{$n$} & \multirow{1}{*}{$\chi$} & \multicolumn{4}{c}{MMT-BP \cite{toth2011}} & \multicolumn{4}{c}{BP-LCol} \\ \cmidrule(lr){4-7} \cmidrule(lr){8-11}
        ~ & ~ & ~ & \multicolumn{2}{c}{\textsc{Var}} & \multicolumn{2}{c}{\textsc{Edg}} & \multicolumn{2}{c}{\textsc{Edg-Std}} & \multicolumn{2}{c}{$\textsc{Clr-Alt}_2$} \\ \cmidrule(lr){4-5} \cmidrule(lr){6-7} \cmidrule(lr){8-9} \cmidrule(lr){10-11}
        & & & \multicolumn{1}{c}{gap} & \multicolumn{1}{c}{time} & \multicolumn{1}{c}{gap} & \multicolumn{1}{c}{time} & \multicolumn{1}{c}{gap} & \multicolumn{1}{c}{time} & \multicolumn{1}{c}{gap} & \multicolumn{1}{c}{time} \\ 
        & & & \multicolumn{1}{c}{(\%)} & \multicolumn{1}{c}{(s)} & \multicolumn{1}{c}{(\%)} & \multicolumn{1}{c}{(s)} & \multicolumn{1}{c}{(\%)} & \multicolumn{1}{c}{(s)} & \multicolumn{1}{c}{(\%)} & \multicolumn{1}{c}{(s)} \\ \midrule
        \texttt{1-FullIns\_4} & 93 & 5 & 20 & \textit{tl} & $>0$ & \textit{tl} & 0 & 29.7 & \textbf{0} & \textbf{12} \\
        \texttt{1-FullIns\_5} & 282 & 6 & 33.3 & \textit{tl} & $>0$ & \textit{tl} & 33.3 & \textit{tl} & 33.3 & \textit{tl} \\
        \texttt{2-FullIns\_4} & 212 & 6 & 16.7 & \textit{tl} & $>0$ & \textit{tl} & 16.7 & \textit{tl} & 16.7 & \textit{tl} \\
        \texttt{2-FullIns\_5} & 852 & 7 & 28.6 & \textit{tl} & $>0$ & \textit{tl} & 28.6 & \textit{tl} & 28.6 & \textit{tl} \\
        \texttt{3-FullIns\_4} & 405 & 7 & 14.3 & \textit{tl} & $>0$ & \textit{tl} & 14.3 & \textit{tl} & 14.3 & \textit{tl} \\
        \texttt{3-FullIns\_5} & 2030 & 8 & 25 & \textit{tl} & $>0$ & \textit{tl} & 25 & \textit{tl} & 25 & \textit{tl} \\
        \texttt{4-FullIns\_4} & 690 & 8 & 12.5 & \textit{tl} & $>0$ & \textit{tl} & 12.5 & \textit{tl} & 12.5 & \textit{tl} \\
        \texttt{4-FullIns\_5} & 4146 & 9 & \textbf{22.2} & \textit{tl} & $>0$ & \textit{tl} & - & \textit{tl} & - & \textit{tl} \\
        \texttt{5-FullIns\_4}  & 1085 & 9 & 11.1 & \textit{tl} & $>0$ & \textit{tl} & 11.1 & \textit{tl} & 11.1 & \textit{tl} \\
        \texttt{1-Insertions\_4} & 67 & 5 & 40 & \textit{tl} & $>0$ & \textit{tl} & 40 & \textit{tl} & 40 & \textit{tl} \\
        \texttt{1-Insertions\_5} & 202 & 6 & 50 & \textit{tl} & $>0$ & \textit{tl} & 50 & \textit{tl} & 50 & \textit{tl} \\
        \texttt{2-Insertions\_4} & 149 & 5 & 40 & \textit{tl} & $>0$ & \textit{tl} & 40 & \textit{tl} & 40 & \textit{tl} \\
        \texttt{3-Insertions\_3} & 56 & 4 & 25 & \textit{tl} & $>0$ & \textit{tl} & 25 & \textit{tl} & \textbf{0} & 551 \\
        \texttt{3-Insertions\_4} & 281 & 5 & 40 & \textit{tl} & $>0$ & \textit{tl} & 40 & \textit{tl} & 40 & \textit{tl} \\
        \texttt{4-Insertions\_3} & 79 & 4 & 25 & \textit{tl} & $>0$ & \textit{tl} & 25 & \textit{tl} & 25 & \textit{tl} \\
        \texttt{abb313GPIA} & 662 & 9 & \textbf{11.1} & \textit{tl} & $>0$ & \textit{tl} & - & \textit{tl} & - & \textit{tl} \\
        \texttt{DSJC125.5} & 125 & 17 & 0 & 7220 & $>0$ & \textit{tl} & \textbf{0} & \textbf{735} & 0 & 748 \\
        \texttt{DSJC125.9} & 125 & 44 & $>0$ & \textit{tl} & 0 & 1558 & 0 & 57.5 & \textbf{0} & \textbf{55.8} \\
        \texttt{DSJC250.9}  & 250 & 72 & 1.4 & \textit{tl} & $>0$ & \textit{tl} & 1.4 & \textit{tl} & 1.4 & \textit{tl} \\
        \texttt{DSJC500.9}  & 500 & ? & 3.1 & \textit{tl} & $>0$ & \textit{tl} & 3.1 & \textit{tl} & 3.1 & \textit{tl} \\
        \texttt{latin\_square\_10} & 900 & ? & 11.8 & \textit{tl} & $>0$ & \textit{tl} & 11.8 & \textit{tl} & 11.8 & \textit{tl} \\
        \texttt{myciel3} & 11 & 4 & 0 & 1.5 & 0 & 1.5 & 0 & 1.5 & 0 & 1.5 \\
        \texttt{myciel4} & 23 & 5 & 0 & 48.8 & 0 & 3.2 & 0 & 1.9 & \textbf{0} & \textbf{1.7} \\
        \texttt{myciel5} & 47 & 6 & $>0$ & \textit{tl} & $>0$ & tl & 0 & 5746 & \textbf{0} & \textbf{199} \\
        \texttt{myciel6} & 95 & 7 & 42.9 & \textit{tl} & $>0$ & \textit{tl} & 42.9 & \textit{tl} & 42.9 & \textit{tl} \\
        \texttt{myciel7} & 191 & 8 & 37.5 & \textit{tl} & $>0$ & \textit{tl} & 37.5 & \textit{tl} & 37.5 & \textit{tl} \\
        \texttt{queen9\_9} & 81 & 10 & 0 & 13.4 & 0 & 23.8 & 0 & 16.8 & \textbf{0} & \textbf{6.9} \\
        \texttt{queen10\_10} & 100 & 11 & \textbf{0} & \textbf{273} & 0 & 378 & 0 & 1349 & 0 & 668 \\
        \texttt{queen11\_11} & 121 & 11 & \textbf{0} & \textbf{746} & $>0$ & \textit{tl} & 8.3 & \textit{tl} & 0 & 5115 \\
        \texttt{queen12\_12} & 144 & 12 & 7.7 & \textit{tl} & $>0$ & \textit{tl} & 7.7 & \textit{tl} & 7.7 & \textit{tl} \\
        \texttt{queen13\_13} & 169 & 13 & 7.1 & \textit{tl} & $>0$ & \textit{tl} & 7.1 & \textit{tl} & 7.1 & \textit{tl} \\
        \texttt{queen14\_14} & 196 & 14 & 6.7 & \textit{tl} & $>0$ & \textit{tl} & 6.7 & \textit{tl} & 6.7 & \textit{tl} \\
        \texttt{queen15\_15} & 225 & 15 & 6.3 & \textit{tl} & $>0$ & \textit{tl} & 6.3 & \textit{tl} & 6.3 & \textit{tl} \\
        \texttt{queen16\_16} & 256 & 16 & \textbf{5.9} & \textit{tl} & $>0$ & \textit{tl} & - & \textit{tl} & - & \textit{tl} \\ \bottomrule
    \end{tabularx}
    \caption{Comparison with the best B\&P algorithm from the literature for GCP on instances from Set 4.}
    \label{Tab:set4}
\end{table}

The first 3 columns of Table \ref{Tab:set4} report the name, the number of vertices $n$, and the chromatic number $\chi$ of each instance, and the rest of the columns report the results of both algorithms.
For each branching strategy, we report the gap (between the best lower and upper bounds achieved within the time limit) and the execution time in seconds.
As MMT-BP was executed on a Pentium IV 2.4 GHz with 1 GB RAM, the execution times reported in columns 5 and 7 have been adjusted in order to make a comparison with similar computing times with our machine, which we estimate to be 2.5 times faster (they reported 7 s to solve the benchmark instance r500.5.b).
The time limit for MMT-BP was 36000 s, including the time spent by MMT to find an initial coloring, which corresponds to a time limit of 14400 s in BC-LCol. To make the table easier to read, columns corresponding to \textsc{Var} and \textsc{Edg-Std} are colored.
In each row, the best result among the algorithms is highlighted in bold.
A ``\textit{tl}'' entry means that the time limit was reached, a ``-'' entry means that the gap cannot be computed since the resolution of the initial linear relaxation exceeded the time limit, and a ``$>0$'' entry in the columns of MMT-BP means that there is a positive gap but the exact value was not reported in \cite{toth2011}.

MMT-BP proves optimality in 6 instances using \textsc{Var} and 5 instances using \textsc{Edg}, with \textsc{Var} being more convenient than \textsc{Edg} except for \texttt{DSJC125.9} and \texttt{myciel4}.
BC-LCol proves optimality in 8 instances using \textsc{Edg-Std} and 10 instances using $\textsc{Clr-Alt}_2$, with $\textsc{Clr-Alt}_2$ being more convenient than \textsc{Edg-Std} except for \texttt{DSJC125.9} (although the difference here is subtle: 57.5 s versus 55.8 s).

In particular, BP-LCol proves optimality for 3 instances that cannot be solved by MMT-BP within the time limit (\texttt{1-FullIns\_4}, \texttt{3-Insertions\_3}, and \texttt{myciel5}).
Considering the instances solved by both algorithms, ours reports better execution times in 4 instances (\texttt{DSJC125.5}, \texttt{DSJC125.9}, \texttt{myciel4}, and \texttt{queen9\_9}) and worse times in 2 instances (\texttt{queen10\_10} and \texttt{queen11\_11}).
In order to explain this behavior, recall that, for instances coming from GCP, the algorithms only differ in the preprocessing and the resolution of the linear relaxations (initialization and pricing routine).
 Then, we executed BC-LCol again without preprocessing the instances, but similar results were obtained, indicating that the differences must come from the resolution of the linear relaxations.
Another point that supports this claim is observed in the time required to solve the initial linear relaxations, where BC-LCol is almost always several times faster, except for \texttt{4-FullIns\_5}, \texttt{abb313GPIA}, and \texttt{queenN\_N} with $\texttt{N} \in \{10,11,12,13,14,15,16\}$.

It is worth mentioning that BC-LCol run out of time when solving the initial linear relaxation of \texttt{4-FullIns\_5}, \texttt{abb313GPIA}, and \texttt{queen16\_16}. Specifically, the resolution of the MWSSP exceeded the time limit (4 hours) looking for a stable set in a particular color graph.
In these cases, the pricing routine could be modified to abort the current color after a certain time and continue with the next one.
If none entering column could be found, then the aborted colors (if any) must be fully scanned.

Similar gaps were reported by both algorithms for the remaining instances.

These results show that BC-LCol is competitive with others from the literature for solving GCP.
Also, the enumeration routine developed by \cite{held2011} seems to be more efficient in solving the MWSSP compared to the approach followed by \cite{toth2011}, except for the aforementioned instances.
This suggests that the approach presented in this work, i.e. treating GCP instances as LCP instances and dividing the solution space according to branching on colors might be better than applying MMT-BP with \texttt{Var}.
To known for sure, MMT-BP should be re-implemented to solve the MWSSP with the enumeration routine of \cite{held2011}.

\subsection{Particular instances of WLCP}

Finally, the results for particular instances of WLCP (Set 5) are discussed.
These instances have from 200 to 900 vertices, the color graphs have a low to medium average density, and the value of the parameter $\bar{k}$ is generally high (except for the instances of Precoloring Extension Problem and a couple of instances of MWSCP).
Although BP-LCol is not expected to perform efficiently here, these instances are still interesting to confirm the behavior of the branching strategies on these particular structures.

The 3 instances for the Precoloring Extension Problem have low-density color graphs and few colors, thus the branching on colors is expected to perform better than the branching on edges.
Besides, the chromatic number is equal to the number of colors, which means that there are few feasible solutions and all of them are optimal.
Both strategies solve \texttt{order18holes120} in less than 1 s and only $\textsc{Clr-Alt}_2$ solves \texttt{order30holes320} in 408 s.
No solution was found within the time limit for \texttt{order30holes316}.

The 10 instances for the MWSCP have edgeless color graphs, which are expected to be the most challenging structure for the branching on edges, and the parameter $\bar{k}$ varies over a wide range of values, where the higher values are expected to be the most challenging for the branching on colors.
\textsc{Edg-Std} solves 2 instances (\texttt{scp41} in less than 1 s and \texttt{rail516} in 123 s), and $\textsc{Clr-Alt}_2$ solves those two instances with similar execution times and \texttt{scpc1} in 611 s.
\textsc{Edg-Std} reports positive gaps in 3 instances, \texttt{scpcyc06} (36\%), \texttt{scpa1} (8\%), and \texttt{scpc1} (3\%), whereas $\textsc{Clr-Alt}_2$ reports positive gaps in 4 instances, \texttt{scpcyc06} (41\%), \texttt{scpa1} (11\%), \texttt{scpclr1} (34\%) and \texttt{scpcyc07} (52\%).
This means that $\textsc{Clr-Alt}_2$ found feasible solutions for all the instances with $\bar{k} \leq 80$.
In the other cases, BP-LCol ran out of memory before the time limit, since no feasible solutions were found to prune the tree.
As mentioned, these results are not intended to compete with those achieved by specific algorithms for the MWSCP.

The 3 instances from a transport application have medium-density color graphs and large lists of colors, which is apparently the most complex structure for both branching strategies.
The instance \texttt{trips215drivers60} is solved in 111 s with \textsc{Edg-Alt} vs. 759 s with $\textsc{Clr-Alt}_2$.
Again, for the remaining instances, the B\&P algorithm ran out of memory before finding a feasible solution.

\section{Conclusions and future work}\label{Sec:Conclusions}

In this work, we present a B\&P algorithm (BP-LCol) to solve the weighted version of the List Coloring Problem, which has many applications and generalizes several problems. 
As far as we are concerned, this is the first exact algorithm based on ILP for this problem.

At the beginning of this work, the focus is on the computational complexity of WLCP restricted to certain families of instances, as a way of exposing the importance of the structure of the input, which depends not only on the graph but also on the color graphs, i.e. the subgraphs induced by the vertices that have a given color in its list.
These considerations anticipate that designing a general algorithm,  efficient for the wide variety of possible instances, would not be easy.

The introduction of the concept of indistinguishable colors contributes to this research in two ways.
On the one hand, 
it allows us to extend to WLCP the well-known procedure of initially coloring a maximal clique on GCP.
On the other hand, the number of variables
and  constraints in the proposed ILP formulation can be reduced, resulting in the same as proposed by \citet{trick1996} when restricted to GCP instances.

The incorporation of dummy variables helps to find a set of columns to initialize the resolution of the linear relaxation by column generation.
The existence of a theoretical weight $M$ for these new variables is proved, such that an optimal solution of the extended linear relaxation can straightforwardly derive an optimal solution of the original linear relaxation or tells that is infeasible, avoiding a classical two-phase initialization.
Two branching strategies were presented: branching on edges adapts the branching used originally by \citet{trick1996} for GCP, and branching on colors is inspired in the division space of DSATUR and the B\&C algorithm developed by \citet{mendezdiaz2006} for GCP.


A large number of computational experiments  is used to evaluate the performance of BP-LCol, considering instances with a wide variety of structures.
The results on random instances of WLCP confirm that the algorithm behaves differently depending on these structures.
Branching on edges improves as the color graphs are denser, whereas branching on colors improves as lists have fewer representative colors.
The results for other sets of instances experimentally reconfirm this behavior.
In particular, the results for random and benchmark instances of GCP suggest that treating them as LCP instances could be beneficial. Indeed, it is possible to take advantage of the developments for this more general problem, such as the branching on colors and the preprocessing.

\medskip

From this research, some questions remain open for future work.
For example, the impact of the weights of the colors on the algorithm efficiency, as well as the existence of other structures in the instances that influence the performance. 

A further study of the column generation process is another possible line of research.
Currently, at most one entering column is searched for each color before re-optimizing the linear relaxation.
Some alternatives to this approach are, for example, to allow as many entering columns for each color as its multiplicity, re-optimize the linear relaxation when a given number of entering columns have been found, and in that case, sort the colors by some criteria, set a time limit for the resolution of the MWSSP, take advantage of colors that have the same color-vertex sets but different weights (to avoid solving twice the MWSSP), maintain a pool of entering columns, among others.


The proposed weight $M = 1000$ for dummy variables during initialization works well in computational experiences.
To cover the hypothetical case in which it fails, the implementation of the two-phase method 
should be added to BP-LCol.

Finally, we believe that the incorporation of a (meta)heuristic for the WLCP would be highly beneficial.
An initial feasible solution, i.e. an upper bound, would allow node pruning from the beginning and would help to avoid the memory problems reported in computational experiences on large instances. 
Additionally, the heuristic can be used to generate an appropriate initial set of columns to initialize the column generation procedure.
Even if the heuristic only finds a partial list coloring, the initialization can still use this solution by adding dummy variables for the uncolored vertices.

\section*{References}

\bibliographystyle{apalike} \bibliography{references}

\begin{thebibliography}{}

\bibitem[Archetti et~al., 2014]{hertz2014}
Archetti, C., Bianchessi, N., and Hertz, A. (2014).
\newblock A branch-and-price algorithm for the robust graph coloring problem.
\newblock {\em Discrete Applied Mathematics}, 165:49--59.
\newblock 10th Cologne/Twente Workshop on Graphs and Combinatorial Optimization
  (CTW 2011).

\bibitem[Biró et~al., 1992]{biro1992}
Biró, M., Hujter, M., and Tuza, Z. (1992).
\newblock Precoloring extension. i. interval graphs.
\newblock {\em Discrete Mathematics}, 100(1):267--279.

\bibitem[Bj\"{o}rklund et~al., 2009]{husfeldt2009}
Bj\"{o}rklund, A., Husfeldt, T., and Koivisto, M. (2009).
\newblock Set partitioning via inclusion-exclusion.
\newblock {\em SIAM Journal on Computing}, 39(2):546--563.

\bibitem[Bonomo and Cecowski~Palacio, 2011]{bonomo2011}
Bonomo, F. and Cecowski~Palacio, M. (2011).
\newblock Between coloring and list-coloring: $\mu$-coloring.
\newblock {\em Ars Combinatoria}, 99:383--398.

\bibitem[Br{\'e}laz, 1979]{brelaz1979}
Br{\'e}laz, D. (1979).
\newblock New methods to color the vertices of a graph.
\newblock {\em Commun. ACM}, 22:251--256.

\bibitem[Chv\'atal et~al., 1987]{CHVATAL}
Chv\'atal, V., Hoàng, C.~T., Mahadev, N. V.~R., and De~Werra, D. (1987).
\newblock Four classes of perfectly orderable graphs.
\newblock {\em Journal of Graph Theory}, 11(4):481--495.

\bibitem[Cygan et~al., 2015]{Computcomplex}
Cygan, M., Fomin, F.~V., Kowalik, {\L}., Lokshtanov, D., Marx, D., Pilipczuk,
  M., Pilipczuk, M., and Saurabh, S. (2015).
\newblock {\em Parameterized algorithms}.
\newblock Springer Publishing Company.

\bibitem[{DIMACS Challenges}, 2002]{DIMACS}
{DIMACS Challenges} (2002).
\newblock Graph coloring instances.
\newblock Retrieved from \url{https://mat.gsia.cmu.edu/COLOR04/}.
\newblock Accessed February 23, 2023.

\bibitem[Edmonds and Karp, 1972]{edmonds1972}
Edmonds, J. and Karp, R.~M. (1972).
\newblock Theoretical improvements in algorithmic efficiency for network flow
  problems.
\newblock {\em Journal of the ACM (JACM)}, 19(2):248--264.

\bibitem[Furini et~al., 2017]{Furini2017}
Furini, F., Gabrel, V., and Ternier, I.-C. (2017).
\newblock An improved dsatur-based branch-and-bound algorithm for the vertex
  coloring problem.
\newblock {\em Networks}, 69(1):124--141.

\bibitem[Furini and Malaguti, 2012]{malaguti2012}
Furini, F. and Malaguti, E. (2012).
\newblock Exact weighted vertex coloring via branch-and-price.
\newblock {\em Discrete Optimization}, 9(2):130--136.

\bibitem[Garey and Johnson, 1979]{GAREY}
Garey, M. and Johnson, D.~S. (1979).
\newblock {\em Computers and Intractability; A Guide to the Theory of
  NP-Completeness}.
\newblock W. H. Freeman.

\bibitem[Garling, 2007]{Garling2007}
Garling, D. J.~H. (2007).
\newblock {\em Inequalities: A Journey into Linear Analysis}.
\newblock Cambridge University Press.

\bibitem[Golovach et~al., 2017]{golovach2017}
Golovach, P.~A., Johnson, M., Paulusma, D., and Song, J. (2017).
\newblock A survey on the computational complexity of coloring graphs with
  forbidden subgraphs.
\newblock {\em Journal of Graph Theory}, 84(4):331--363.

\bibitem[Gomes and Selman, 1997]{quasigroup}
Gomes, C.~P. and Selman, B. (1997).
\newblock Problem structure in the presence of perturbations.
\newblock In {\em Proceedings of the fourteenth national conference on
  artificial intelligence and ninth conference on Innovative applications of
  artificial intelligence (AAAI'97/IAAI'97)}, page 221–226, Providence, RI,
  USA. AAAI Press.

\bibitem[Held et~al., 2011]{held2011}
Held, S., Cook, W., and Sewell, E.~C. (2011).
\newblock Safe lower bounds for graph coloring.
\newblock In {\em Proceedings of the fifteenth international conference on
  Integer Programming and Combinatoral Optimization (IPCO 2011)}, pages
  261--273, New York, NY, USA. Springer.

\bibitem[IBM, 2016]{CPLEX}
IBM (2016).
\newblock {ILOG CPLEX Optimization Studio 12.7}.
\newblock Retrieved from
  \url{https://www.ibm.com/products/ilog-cplex-optimization-studiol}.
\newblock Accessed February 23, 2023.

\bibitem[Kubale, 2004]{kubale2004}
Kubale, M. (2004).
\newblock {\em Graph colorings}, volume 352.
\newblock American Mathematical Soc.

\bibitem[Lucci et~al., 2018]{ejs2018}
Lucci, M., Nasini, G., and Sever\'in, D. (2018).
\newblock Planning the workday of bus drivers by a graph list-coloring model.
\newblock {\em Electronic Journal of SADIO (EJS)}, 17(1):77--91.

\bibitem[Lucci et~al., 2019]{entcs2019}
Lucci, M., Nasini, G., and Sever\'in, D. (2019).
\newblock A branch and price algorithm for list coloring problem.
\newblock {\em Electronic Notes in Theoretical Computer Science}, 346:613--624.
\newblock The proceedings of the tenth Latin and American Algorithms, Graphs
  and Optimization Symposium (LAGOS 2019).

\bibitem[Malaguti et~al., 2011]{toth2011}
Malaguti, E., Monaci, M., and Toth, P. (2011).
\newblock An exact approach for the vertex coloring problem.
\newblock {\em Discrete Optimization}, 8(2):174--190.

\bibitem[Mehrotra and Trick, 1996]{trick1996}
Mehrotra, A. and Trick, M.~A. (1996).
\newblock A column generation approach for graph coloring.
\newblock {\em INFORMS Journal on Computing}, 8(4):344--354.

\bibitem[M\'endez-D\'iaz and Zabala, 2006]{mendezdiaz2006}
M\'endez-D\'iaz, I. and Zabala, P. (2006).
\newblock A branch-and-cut algorithm for graph coloring.
\newblock {\em Discrete Applied Mathematics}, 154(5):826--847.
\newblock IV ALIO/EURO Workshop on Applied Combinatorial Optimization.

\bibitem[Tuza, 1997]{tuza1997}
Tuza, Z. (1997).
\newblock Graph colorings with local constraints - a survey.
\newblock {\em Discussiones Mathematicae Graph Theory}, 17(2):161--228.

\bibitem[Zykov, 1949]{zykov1949}
Zykov, A.~A. (1949).
\newblock On some properties of linear complexes.
\newblock {\em Matematicheskii sbornik}, 66(2):163--188.

\end{thebibliography}

\end{document}